\begin{document}
\title{Perfect matching cuts partitioning a graph into complementary subgraphs\thanks{This research has received funding from Rio de Janeiro Research Support Foundation (FAPERJ) under grant agreement E-26/201.344/2021,  National Council for Scientific  and Technological Development (CNPq) under grant agreement 309832/2020-9, and the European Research Council (ERC) under the European Union's Horizon $2020$ research and innovation programme under grant agreement CUTACOMBS (No. $714704$). }}

%
%
\author{
Diane Castonguay\inst{1} \and
Erika M. M. Coelho\inst{1} \and 
Hebert Coelho\inst{1} \and 
Julliano R. Nascimento\inst{1} \and
U\'{e}verton S. Souza\inst{2,3} 
}
\authorrunning{Castonguay et al.}
%
\institute{Instituto de Informática, Universidade Federal de Goi\'{a}s, Goiânia, Brazil \\
\email{ \{diane,erikamorais,hebert,jullianonascimento\}@inf.ufg.br} \and
Instituto de Computação, Universidade Federal Fluminense, Niterói, Brazil \and
Institute of Informatics, University of Warsaw, Warsaw, Poland \\
\email{ueverton@ic.uff.br} 
}
\maketitle              
\begin{abstract}
In \textsc{Partition Into Complementary Subgraphs (Comp-Sub)} we are given a graph $G=(V,E)$, and an edge set property $\Pi$, and asked whether $G$ can be decomposed into two graphs, $H$ and its complement $\overline{H}$, for some graph $H$, in such a way that the edge cut $[V(H),V(\overline{H})]$ satisfies the property $\Pi$.
Motivated by previous work, we consider $\textsc{Comp-Sub}(\Pi)$ when the property $\Pi=\mathcal{PM}$ specifies that the edge cut of the decomposition is a perfect matching.
We prove that $\textsc{Comp-Sub}(\mathcal{PM})$ is $\GI$-hard when the graph $G$ is 
$\{C_{k\geq 7}, \overline{C}_{k\geq 7} \}$-free. On the other hand, we show that $\textsc{Comp-Sub}(\mathcal{PM})$ is polynomial time solvable on $hole$-free graphs and on $P_5$-free graphs. Furthermore, we  present characterizations of $\textsc{Comp-Sub}(\mathcal{PM})$ on chordal, distance-hereditary, and extended $P_4$-laden graphs.

\keywords{graph partitioning \and complementary subgraphs \and perfect matching \and matching cut \and graph isomorphism}
\end{abstract}
\section{Introduction}
\label{sec:introduction}

Finding graph partitions with some special properties has been a topic of extensive research.
Several combinatorial problems can be viewed as partition problems, such as {\sc Vertex Coloring} and {\sc Clique Cover}. 
In addition, many graph classes, e.g. bipartite and split graphs, can also be defined through a partition of its vertex set. In particular, the class of \emph{complementary prisms}~\cite{haynes2007complementary} are defined over complementary parts. 
The \textit{complementary prism} $G\overline{G}$ of a graph $G$ arises from the disjoint union of the graph $G$ and its complement $\overline{G}$ by adding the edges of a perfect matching between vertices with same label in $G$ and $\overline{G}$. Studies concerning the computational complexity of classical graph problems restricted to the class of complementary prisms graphs can be found in~\cite{Camargo_Prism,Duarte_Prism}. 

We say that a graph $G=(V,E)$ is decomposed into two graphs $G_1$ and $G_2$ if $V(G)$ can be partitioned into $V_1$ and $V_2$, where $G[V_1]=G_1$ and $G[V_2]=G_2$. The edge cut $[V_1,V_2]$ is called the edge cut of this decomposition.

As a generalization of complementary prisms, Nascimento, Souza and Szwarcfiter~\cite{nascimento2021partitioning} introduced the problem defined as follows.

\noindent
\fbox{\begin{minipage}{0.98\textwidth}
\textsc{Partition Into Complementary Subgraphs (Comp-Sub)}\vspace{.1cm}

\textbf{Instance:} A graph $G=(V,E)$, and an edge set property $\Pi$.\vspace{.1cm}

\textbf{Question:} Can $G$ be decomposed into two graphs, $H$ and its complement $\overline{H}$, for some graph $H$, in such a way that the edge cut $M$ of the decomposition satisfies the property $\Pi$?
\end{minipage}}

\medskip

For short, we abbreviate \textsc{Partition Into Complementary Subgraphs} with the edge set property $\Pi$  as $\textsc{Comp-Sub}(\Pi)$. We write $G\in \textsc{Comp-Sub}(\Pi)$ to denote that $G$ is a $yes$-instance of $\textsc{Comp-Sub}(\Pi)$ and we call $(H,\overline{H})$ as a \textit{complementary decomposition} of $G$.


The $\textsc{Comp-Sub}(\Pi)$ problem also finds motivation in parameterized complexity.
Recognizing whether a graph has a complementary decomposition can be useful for solving problems in \textsf{FPT}-time, as pointed out in~\cite{nascimento2021partitioning}. Nascimento, Souza and Szwarcfiter~\cite{nascimento2021partitioning} considered the cases where the edge cut $M$ is empty or induces a complete bipartite graph. They also presented some remarks when $\Pi$ is a general edge set property. In particular, when $M$ is empty, they make some links between $\textsc{Comp-Sub}(\Pi)$ and the \textsc{Graph Isomorphism} problem, from which they show that $\textsc{Comp-Sub}(\Pi)$ is $\GI$-hard. 

It is known that the recognition of complementary prisms can be done in polynomial time~\cite{cappelle2014recognizing}. This implies that, when the property $\Pi$ is a perfect matching $M$ between corresponding vertices in $H$ and $\overline{H}$, the $\textsc{Comp-Sub}(\Pi)$ problem is polynomial-time solvable.
So, a natural question is the study of $\textsc{Comp-Sub}(\Pi)$ when $\Pi$ specifies that $M$ is any perfect matching. In this context, two related problems arise: \textsc{Matching Cut}~\cite{kratsch2016algorithms,patrignani2001complexity} and \textsc{Perfect Matching Cut} \cite{heggernes1998partitioning}.
A \emph{(perfect) matching cut} is a partition of vertices of a graph into two parts such that the set of edges crossing between the parts forms a (perfect) matching. 
 Considering $\Pi=\mathcal{PM}$ as the property of being a perfect matching, $\textsc{Comp-Sub}(\mathcal{PM})$ can be seen as a variant of \textsc{Perfect Matching Cut} with the additional restriction that the two parts must induce complementary subgraphs. Note that studies regarding matchings satisfying particular constraints have received wide attention in the literature (c.f.~\cite{InducedMatchings,lima2021computational,IPL_LIMA,UniquelyRestricted,dmtcsProttiSouza}).

Motivated by Nascimento, Souza and Szwarcfiter~\cite{nascimento2021partitioning}, in this paper we deal with $\textsc{Comp-Sub}(\Pi)$, when $\Pi=\mathcal{PM}$ considers $M$ as a perfect matching. 
We show that $\textsc{Comp-Sub}(\mathcal{PM})$ is $\GI$-hard when the graph $G$ is 
$\{C_{k\geq 7}, \overline{C}_{k\geq 7} \}$-free. 
On the other hand, we present polynomial time algorithms able to solve $\textsc{Comp-Sub}(\mathcal{PM})$ when the input graph $G$ is $hole$-free or $P_5$-free. In addition, we characterize graphs $G \in \textsc{Comp-Sub}(\mathcal{PM})$ when $G$ is chordal,  distance-hereditary, or extended $P_4$-laden. Although extended $P_4$-laden graphs generalize cographs, we also show a simpler characterization for cographs.

The paper is organized as follows. Section~\ref{sec:preliminaries} contains some fundamental concepts and an auxiliary result. Sections~\ref{sec:results-cycle} and~\ref{sec:results-p4s} contains our results on some $cycle$-free graphs and graphs with few $P_4$'s, respectively. Further discussions are presented in Section~\ref{sec:conclusions}. 

\section{Preliminaries}
\label{sec:preliminaries}

We consider only finite, simple, and undirected graphs, and we use standard terminology and notation. See~\cite{bondy1976graph} for graph-theoretic terms not defined here.

Let $G$ be a graph. For a vertex $v \in V(G)$, we denote its \textit{open neighborhood} by $N_G(v)$, and its \textit{closed neighborhood}, denoted by $N_G[v] := N_G(v) \cup \{v\}$.  For a set $U \subseteq V(G)$, let $N_G(U) = \bigcup_{v \in U} N_G(v) \setminus U$, and $N_G[U] = N_G(U) \cup U$.  
The subgraph of $G$ \textit{induced} by $U$, denoted by $G[U]$, is the graph whose vertex set is $U$ and whose edge set consists of all the edges in $E(G)$ that have both endvertices in $U$.

Let $G$ be a graph. A set $U \subseteq V(G)$ is called a \textit{clique} (resp. \textit{independent set}) if the vertices in $U$ are pairwise adjacent (resp. nonadjacent).
We denote by $K_n$ a \textit{complete graph}, $I_n$ an independent set, $P_n$ a \textit{path graph}, and $C_n$ a \textit{cycle graph} on $n$ vertices.
Let $r$ be a positive integer. An $r$\emph{-partite graph} is one whose vertex set can be partitioned into $r$ subsets, in such a way that no edge has both ends in the same subset. An $r$-partite graph is \emph{complete} if any two vertices in different subsets are adjacent. When $r$ is not specified, we simply say \textit{(complete) multipartite}. 
A \textit{split graph} $G$ is one whose vertex set admits a partition $V(G) = C \cup I$ into a clique $C$ and an independent set $I$.
The \textit{complement} $\overline{G}$ of a graph $G$ is the graph defined by $V(\overline{G}) = V(G)$ and $uv \in E(\overline{G})$ if and only if $uv \notin E(G)$.


Let $P = v_1v_2\dots v_n$ be a path. We call $v_2, \dots, v_{n-1}$ as \textit{inner vertices} of $P$.
Two or more paths in a graph are \textit{independent} if none of them contains an inner vertex of another. A graph $G$ is $\ell$\textit{-connected} if any two of its vertices can be joined by $\ell$ independent paths. A $2$-connected graph is called {\em biconnected}.

A vertex $v$ in a graph $G$ is a \textit{cutvertex} or \textit{cutpoint}, if $G \setminus \{v\}$ is disconnected. A maximal connected subgraph without a cutpoint is a \textit{block}.
The \textit{block-cutpoint tree}  of a graph $G$ is a bipartite
graph whose vertex set consists of the set of cutpoints of $G$ and
the set of blocks of $G$. A cutpoint is adjacent to a block
whenever the cutpoint belongs to the block in $G$.

Two graphs $G=(V ,E)$ and $G'=(V',E')$ are \textit{isomorphic}, denoted as $G \simeq H$, if and only if there is a bijection, called \textit{isomorphism function}, $\varphi: V \to V'$ such that $uv \in E$ if and only if $\varphi(u)\varphi(v) \in E'$, for every $u,v \in V$. A graph $G$ is \textit{self-complementary} if $G \simeq \overline{G}$. The \textsc{Graph Isomorphism} problem receives as input two graphs $G$ and $G'$ and asks whether $G \simeq G'$.
We denote by $\GI$ the class of problems that admit a polynomial-time reduction to \textsc{Graph Isomorphism}. 

A problem $Q$ is $\GI$-\textit{complete} if the two conditions are satisfied:
(i) $Q$ is a member of $\GI$; and (ii) $Q$ is $\GI$-\textit{hard}, that is, for every problem $Q' \in \GI$, $Q'$ is polynomially reducible to $Q$.

We denote the set of positive integers $\{1, \dots, k\}$ by $[k]$. Let $G$ and $G_1,\dots, G_k$ be graphs. We say that $G$ is \textit{$\{G_1,\dots, G_k\}$-free} if $G$ does not contain $G_i$ as an induced subgraph, for every $i \in [k]$. 


Let $G$ and $H$ be two graphs such that $V(G)\cap V(H)=\emptyset$. The \textit{disjoint union} of $G$ and $H$, denoted by $G \cup H$, is the graph with $V(G \cup H) = V(G) \cup V(H)$ and $E(G \cup H) = E(G) \cup E(H)$.
The \textit{join} of $G$ and $H$, denoted by $G + H$, is the graph with $V(G + H) = V(G) \cup V(H)$ and $E(G+H) = E(G) \cup E(H) \cup \{uv : u \in V(G) \mbox{ and } v \in V(H)\}$. 

Let $G$ be a graph and $\mathcal{C}$ a class of graphs. A set $S \subseteq V(G)$ is a $\mathcal{C}$\textit{-modulator} if $G \setminus S$ belongs to $\mathcal{C}$. We define the \textit{distance} of $G$ to class $\mathcal{C}$ as the size of a minimum $S$ which is a $\mathcal{C}$-modulator.

Let $G$ be a graph that has a complementary decomposition $(G_1, G_2)$ with perfect matching cut $M = \{ u_1v_1, \dots, u_nv_n \}$, where $u_i \in V(G_1)$ and $v_i \in V(G_2)$, $i \in [n]$. We say that $u_i$ (resp. $v_i$) is the \textit{corresponding vertex} of $v_i$ (resp. $u_i$), for every $i \in [n]$. 
For $X \subseteq V(G_1)$, we call $X^{G_2} = \{ v_i \in V(G_2) : u_i \in X \}$ as the \textit{corresponding set} of $X$ over $G_2$. Similarly, for $X \subseteq V(G_2)$, we call $X^{G_1} = \{ u_i \in V(G_1) : v_i \in X \}$ as the \textit{corresponding set} of $X$ over $G_1$.

\medskip

Next, we present an auxiliary result, defined for $\textsc{Comp-Sub}(\mathcal{PM})$ with a restriction on the graphs of the decomposition. A \textit{cograph} is a $P_4$-free graph.

\begin{lemma}\label{lemma:H_in_cograph}
Let $G$ be a graph.
The problem of determining whether $G$ can be decomposed into two graphs, $G_1$, and its complement $G_2$, such that $G_1$ is a cograph and the edge cut of the decomposition is a perfect matching, can be solved in polynomial time.
\end{lemma}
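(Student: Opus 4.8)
\medskip\noindent\textbf{Proof plan.}
The plan is to recast the question as a constrained partition problem and then exploit the rigidity that a perfect matching cut forces on a complementary decomposition. A decomposition as in the statement is precisely a partition $V(G)=V_1\cup V_2$ for which the set $M$ of edges with exactly one endpoint in each part is a perfect matching of $G$ (so that $|V_1|=|V_2|$), $G[V_1]$ is a cograph, and $G[V_1]\simeq\overline{G[V_2]}$. Since the complement of a cograph is again a cograph, $G[V_2]$ is then a cograph as well, so the requirement ``$G[V_1]$ is a cograph'' is symmetric in $V_1$ and $V_2$, and I may exchange the two parts freely. Writing $n:=|V(G)|/2$, I would first discard the instances with $|V(G)|$ odd, settle $n\le 1$ by hand ($n=0$ is a yes-instance and $n=1$ is a yes-instance iff $G\simeq K_2$), and note that every yes-instance is connected: from any vertex one reaches all of whichever of $G[V_1],G[V_2]$ is connected, and then all remaining vertices along $M$. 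Hence a disconnected $G$ is a no-instance; from now on $n\ge 2$ and $G$ is connected.

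The second step is to pin down the shape of a decomposition. Every cograph on at least two vertices is disconnected or has disconnected complement; combined with $G[V_1]\simeq\overline{G[V_2]}$ this forces exactly one of $G[V_1],G[V_2]$ to be connected, and that part, being a connected cograph on at least two vertices, is a join $H_1+H_2$ of two nonempty cographs. Moreover, deleting the edges of $M$ from $G$ yields the graph $G-M$, which is exactly the disjoint union $G[V_1]\cup G[V_2]$; consequently the connected part is a connected component of $G-M$ on exactly $n$ vertices, while the remaining components of $G-M$ (together of order $n$) are the cographs whose disjoint union forms the other part. An immediate dividend is that, \emph{once the cut $M$ is known, the whole configuration can be verified in polynomial time}: it suffices to check that $G-M$ is a cograph, to pick a component of order $n$ as one part, to check that every edge of $M$ runs between the two parts, and to test whether the two parts are complementary --- and cograph recognition and cograph isomorphism are both polynomial-time tasks. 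The entire difficulty therefore lies in \emph{locating} $M$.

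To produce $M$ I would recurse along the union/join structure of the cograph part. A vertex $u$ of degree one (a pendant vertex) must be an isolated vertex of the part containing it, so the unique edge at $u$ is forced into $M$; assigning $u$ to one side forces its neighbour to the other, and further forced assignments propagate (a neighbour of an already placed $V_1$-vertex other than its $M$-partner must itself lie in $V_1$, and symmetrically). When $G$ has no pendant vertex, the connected part is a proper join $A+B$ with all $A$--$B$ edges present in $G$; there I would branch on a candidate split of the would-be connected side into $A$ and $B$, recurse on each piece, and use that $\overline{A}\cup\overline{B}$ must reproduce the component structure of the disconnected part. Each recursive call acts on a strictly smaller vertex set, and I expect the set of admissible placements examined along the way to stay polynomial, since a perfect matching cut allows only a very limited interleaving of the two parts; the cograph-isomorphism test is invoked to close each completed branch.

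The step I expect to be the main obstacle is that the matching $M$ need \emph{not} be compatible with the isomorphism witnessing $G[V_1]\simeq\overline{G[V_2]}$: in general the $M$-partner of a vertex is not the vertex that this isomorphism pairs with it. Consequently one cannot simply delete a pendant vertex together with the vertex it is matched to and recurse, since the residual graph may stop being an instance of the same form --- deleting these two vertices need not preserve complementarity. Carefully handling this decoupling, that is, certifying that a partial, forced placement can still be extended to a genuine complementary decomposition even though the matching and the isomorphism are not aligned, is the technical heart of the argument, and the point where the polynomial isomorphism test for cographs is used in an essential way.
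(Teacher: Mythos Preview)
Your structural setup is sound and matches the paper's: both sides are cographs, exactly one side is connected, that side is a join $H_1+H_2$, and once the partition is known everything can be checked in polynomial time using cograph isomorphism. Where your proposal falls short is precisely at the step you yourself flag as the obstacle: you never actually give a polynomial-time procedure for locating $M$. The recursion you sketch is not well-defined. ``Branch on a candidate split of the would-be connected side into $A$ and $B$'' presupposes you already know the vertex set of the connected side, which is exactly what you are trying to determine; and ``recurse on each piece'' leads to subproblems that are not instances of the same decision problem, so it is unclear what is being solved recursively. The sentence ``I expect the set of admissible placements examined along the way to stay polynomial'' is a hope, not an argument, and without it you have no running-time bound at all.

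The paper's proof avoids this entirely with a single observation you almost made. You noted that the connected side is a join $H_1+H_2$; the extra step is that any $x_1\in H_1$ and $x_2\in H_2$ then satisfy $N_{G_1}[\{x_1,x_2\}]=V(G_1)$, so in $G$ one has $N_G[\{x_1,x_2\}]=V(G_1)\cup\{y_1,y_2\}$, where $y_1,y_2$ are the $M$-partners of $x_1,x_2$. Hence enumerating all $4$-tuples $(x_1,x_2,y_1,y_2)\in V(G)^4$ and setting $V(G_1):=N_G[\{x_1,x_2\}]\setminus\{y_1,y_2\}$ produces only $O(n^4)$ candidate partitions, each of which can be verified (perfect matching cut, cograph recognition, cograph isomorphism) in polynomial time. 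No recursion, no branching, and the decoupling between $M$ and the isomorphism that worried you is a non-issue, since the isomorphism is only tested after the partition is fixed.
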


\begin{proof}
Let $\mathcal{C}$ be the class of cographs and $G$ a $2n$-vertex graph. Suppose that $G$ is decomposable into complementary subgraphs $G_1$ and $G_2$, such that $G_1 \in \mathcal{C}$ and the edge cut $M$ of the decomposition is a perfect matching.

Since $\mathcal{C}$ is closed under complement, we have that $G_2 \in \mathcal{C}$. Given that \textsc{Graph Isomorphism} is linear-time solvable on cographs~\cite{colbourn1981on}, we perform a brute force algorithm to check every {\em relevant} partition $V(G_1), V(G_2)$ of $V(G)$. For that, we propose Algorithm~1, explained in sequel.

\begin{center}
\scalebox{0.95}{
\begin{minipage}{1.0\textwidth}
\begin{algorithm2e}[H]
\label{alg:cograph}
 \algsetup{linenosize=\small}
 \normalsize
 \DontPrintSemicolon
 \LinesNumbered
 \BlankLine
 \KwIn{A graph $G$.}
 \KwOut{Whether $G$ admits a complementary decomposition such that the edge cut of the decomposition is a perfect matching.}
 \BlankLine

    \ForAll{$x_1,x_2, y_1,y_2 \in V(G)$}{
        $V(G_1) := N_G[\{x_1,x_2\}] \setminus \{y_1,y_2\}$\;
        $V(G_2) := V(G) \setminus V(G_1)$\;
        $M := \{ xy \in E(G) : x \in V(G_1), y \in V(G_2)\}$\;
        \If{$M$ is a perfect matching \algorithmicand\ $G_1$ is a cograph \algorithmicand\ $G_2$ is a cograph \algorithmicand\ $G_1 \simeq \overline{G}_2$}{
            \Return yes\;
        }
    }
    \Return no\;
 
\caption{\textsc{Partition-Into-Complementary-Cographs}($G$)}
\end{algorithm2e}
\end{minipage}
}
\end{center}

We know that a cograph is connected if and only if its complement is disconnected~\cite{corneil1981complement}. Consequently, if a complementary decomposition $(G_1, G_2)$ exists, then either $G_1$ or $G_2$ is disconnected, say $G_2$. Then $G_1$ can be obtained by a join between the corresponding connected components of $G_2$. Thus, there exist two adjacent vertices $x_1, x_2 \in V(G_1)$, such that $N_{G_1}[\{x_1,x_2\}] = V(G_1)$. Furthermore, the edge set $M$ of the decomposition implies that there exist $y_1, y_2 \in V(G_2)$ such that $x_1y_1, x_2y_2 \in M$. 

By the above arguments, it is possible to find $V(G_1)$ by means of $N_{G}[\{x_1,x_2\}]$ except for two vertices $y_1, y_2 \in N_{G}[\{x_1,x_2\}]$ that must belong to $V(G_2)$. This way, $V(G_2)$ is obtained by $\{y_1,y_2\} \cup \{ v \in V(G) : v \notin N_{G}[\{x_1,x_2\}] \}$. 
Once found $V(G_1), V(G_2)$, and $M$, we test whether $M$ is a perfect matching and whether $G_1$ and $G_2$ are cographs. If so, we compute $\overline{G}_2$ and then we check isomorphism between $G_1$ and $\overline{G}_2$. 

The correctness of the algorithm follows from the fact that all the possible relevant partitions (for the emergence of the cographs, if any) are considered.

Now, we show that Algorithm~1 runs in polynomial time.

For enumerating every $4$-tuple of vertices $x_1,x_2, y_1,y_2 \in V(G)$ it is required $O(n^4)$ time. After, in $O(n+m)$ time we can check whether $M$ is a perfect matching, as well as checking whether $G_1$ and $G_2$ are cographs. Finally, for computing $\overline{G}_2$ and checking isomorphism between $G_1$ and $\overline{G}_2$ is also required $O(n+m)$ time. Therefore, the running time of Algorithm~1 takes $O(n^5+n^4m)$ time. 
\qed \end{proof}

\bigskip
\section{Results on some $C_k$-free graphs}
\label{sec:results-cycle}

We begin by showing a hardness result, in Theorem~\ref{theo:perfHardness}. 

\begin{theorem}\label{theo:perfHardness} $\textsc{Comp-Sub}(\mathcal{PM})$ is $\GI$-hard on $\{C_{k\geq 7}, \overline{C}_{k\geq 7} \}$-free graphs.
\end{theorem}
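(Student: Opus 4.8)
The plan is to reduce from \textsc{Graph Isomorphism} itself, mirroring the strategy used in~\cite{nascimento2021partitioning} for the empty-cut case but adapted so that the forced cut is a perfect matching rather than an empty set. Given an instance $(G_1,G_2)$ of \textsc{Graph Isomorphism} on $n$ vertices each, I would build a graph $G$ on $2n$ vertices consisting of a copy of $G_1$, a copy of $G_2$, plus a perfect matching linking the $i$-th vertex of the $G_1$-copy to the $i$-th vertex of the $G_2$-copy (that is, essentially the ``complementary-prism-style'' gadget but with $G_2$ in place of $\overline{G_1}$). The key claim is then: $G \in \textsc{Comp-Sub}(\mathcal{PM})$ if and only if $G_2 \simeq \overline{G_1}$; so to get an equivalence with plain \textsc{Graph Isomorphism} I would instead take the input copy to be $\overline{G_2}$, i.e. build $G$ from $G_1$, $\overline{G_2}$, and a linking perfect matching, so that $G$ has a complementary decomposition with perfect matching cut iff $G_1 \simeq G_2$.

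The first substantive step is the \emph{forward} direction: if $G_1 \simeq G_2$ then the obvious bipartition $(V(G_1)$-side, $V(\overline{G_2})$-side$)$ is a complementary decomposition whose cut is exactly the added perfect matching, so $G \in \textsc{Comp-Sub}(\mathcal{PM})$. The second step — the harder and more delicate one — is the \emph{reverse} direction: I must argue that \emph{any} complementary decomposition of $G$ with a perfect matching cut is, up to the symmetry of the construction, the intended one, so that it witnesses $G_1 \simeq G_2$ (equivalently $\overline{G_2}\simeq G_1$). This is where a rigidity argument is needed: one has to show that the bipartition cannot ``mix'' vertices from the $G_1$-part with vertices from the $\overline{G_2}$-part in an unintended way. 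The standard trick is to make the two halves ``heavy'' and structurally incompatible with being split — e.g. by padding $G_1$ and $\overline{G_2}$ with a large clique (or large independent set, or a universal/isolated vertex) so that in any $\mathcal{PM}$-cut each padded half must lie entirely on one side, because a clique of size $\geq 3$ cannot be crossed by a matching cut. One must be careful that the padding is added symmetrically and in a self-complementary-compatible way (adding a $K_t$ to one side forces adding an $\overline{K_t}=I_t$ to the other so that the two halves remain complements), and that it does not create the forbidden induced subgraphs.

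The third step is to verify the forbidden-subgraph condition: the constructed $G$ must be $\{C_k, \overline{C}_k : k\geq 7\}$-free. Here I would exploit that $G$ is built from two halves of bounded ``co-structure'' joined only by a perfect matching: a long induced cycle $C_k$ ($k\geq 7$) would have to use matching edges to pass between the halves, and since the matching is induced and each vertex has exactly one matching-neighbor, an induced cycle crossing the cut must do so an even number of times, using isolated matching edges as ``bridges''; a short case analysis on how many times the putative cycle crosses should rule out length $\geq 7$. The complement case $\overline{C}_k$ is handled dually, or more cleanly by observing $\overline{C}_k$ for $k\geq 7$ has a specific dense structure (it is $2$-connected with small independence number) that cannot be embedded given how sparse the cut is. I expect \emph{this} forbidden-subgraph verification, together with getting the padding gadget exactly right so that the rigidity argument goes through while preserving $\{C_{k\geq7},\overline{C}_{k\geq7}\}$-freeness, to be the main obstacle; the two directions of the reduction itself are comparatively routine once the gadget is fixed. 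Finally I would note that the reduction is clearly polynomial-time computable, completing the proof that $\textsc{Comp-Sub}(\mathcal{PM})$ is $\GI$-hard on this class.
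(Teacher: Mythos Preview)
Your plan has the right overall shape — reduce from \textsc{Graph Isomorphism}, build two complementary halves linked by a perfect matching, add padding to force rigidity — but there is a genuine gap in the forbidden-subgraph step, and it is not a detail that can be patched by a ``short case analysis''.

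You reduce from \textsc{Graph Isomorphism} on \emph{arbitrary} graphs $G_1,G_2$. But then nothing prevents $G_1$ itself from containing an induced $C_7$ (or $\overline{G_2}$ from containing an induced $\overline{C_7}$), and your constructed graph $G$ inherits these as induced subgraphs. Your argument that ``a long induced cycle would have to use matching edges to pass between the halves'' presupposes that each half is already free of long induced cycles, which you never arrange. No amount of padding with cliques or independent sets removes an induced $C_7$ already sitting inside $G_1$. So as written, the reduction does not land in the target class.

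The paper fixes this by reducing from \textsc{Graph Isomorphism} restricted to \emph{split} graphs, which is still $\GI$-hard. Split graphs are $\{2K_2,C_4,C_5\}$-free, hence $C_{k\geq 4}$-free and $\overline{C}_{k\geq 4}$-free, so each half of the construction is automatically free of long holes and antiholes. The concrete gadget is: given connected split graphs $A,B$ on $n$ vertices, take $A\cup\overline{B}\cup K_n\cup I_n$, join $A$ to $K_n$ completely, and add arbitrary perfect matchings $A\leftrightarrow I_n$ and $K_n\leftrightarrow\overline{B}$. Both halves $H_1=G[A\cup K_n]$ and $H_2=G[\overline{B}\cup I_n]$ are then split, which is exactly what drives the $\{C_{k\geq 7},\overline{C}_{k\geq 7}\}$-freeness check. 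The rigidity argument is also cleaner than the clique-crossing idea you sketch: the vertices of $I_n$ have degree~$1$ in $G$, so in any perfect-matching cut each of them must lie opposite its unique neighbour in $A$; together with connectedness of $A$ this pins down the entire partition.

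In short, the missing idea is to start from a $\GI$-hard class that is already hole- and antihole-free (split graphs), rather than trying to argue away long cycles after the fact.
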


\begin{proof}
Given that \textsc{Graph Isomorphism} is $\GI$-hard on split graphs~\cite{chung1985on}, we show a polynomial-time reduction from such a problem to $\textsc{Comp-Sub}(\mathcal{PM})$. 

Note that a split graph is connected if and only if it does not contain isolated vertices. Therefore, we may assume that the instances of \textsc{Graph Isomorphism} on split graphs are pairs of connected split graphs.

Let $A$ and $B$ be connected split graphs such that $|V(A)| = |V(B)| = n$, for some $n \geq 3$.
From an instance $(A,B)$ of \textsc{Graph Isomorphism}, we construct an instance $G$ of $\textsc{Comp-Sub}(\mathcal{PM})$.

Let $G$ arise from the disjoint union between $A$, $\overline{B}$, $K_n$, and $I_n$. Denote $K_n$ by $K$ and $I_n$ by $I$.
We make every vertex in $V(A)$ adjacent to every vertex in $V(K)$. Furthermore, we add an arbitrary perfect matching between $V(A)$ and $V(I)$ and between $V(K)$ and $V(\overline{B})$.
An example of graph $G$ follows in Figure~\ref{fig:construcao_ueverton}.
Aditionally, let $H_1 = G[V(A) \cup V(K)]$ and $H_2 = G[V(\overline{B}) \cup V(I)]$. Clearly, the construction can be done in polynomial time.

We first show that $G$ is $\{C_{k\geq 7}, \overline{C}_{k\geq 7} \}$-free. 

\smallskip
\noindent{\bf Claim 1.}\label{claim:desired_class}
{\it
Let $G$ be the graph obtained from the construction. It holds that $G$ is a $\{C_{k\geq 7}, \overline{C}_{k\geq 7} \}$-free graph. 
}

\begin{proof} \textit{(of Claim~1.)}
We prove that (I) $G$ is $C_{k\geq 7}$-free, and (II) $G$ is $\overline{C}_{k\geq 7}$-free.

(I) Suppose by contradiction that $G$ contains a $C_{k\geq 7}$, denoted as $C$, as induced subgraph. We may assume that $k$ is minimum. 

By construction, $H_1$ and $H_2$ are split graphs and it is clear that $H_1$ and $H_2$ are $C_{\ell+4}$-free, for every $\ell \geq 0$. Then $V(C) \not\subseteq V(H_1)$ and $V(C) \not\subseteq V(H_2)$. So, we assume that $V(C) \cap V(H_1) \neq \emptyset$ and $V(C) \cap V(H_2) \neq \emptyset$. Since $I$ is a set of vertices with degree one in $G$, we have that $V(C) \cap I = \emptyset$. So, we may suppose that $V(C) \cap V(\overline{B}) \neq \emptyset$ and since $C$ is a cycle, $|V(C) \cap V(\overline{B})| \geq 2$. Since $\overline{B}$ is split, we have that $|V(C) \cap V(\overline{B})| \leq 4$.

Since  $C$ is a cycle and $K$ is a complete graph, $C$ must contain exactly two vertices from $K$ and no vertex of $A$. Then, $|V(C)| \geq 7$ implies that $|V(C)  \cap V(\overline{B})| \geq 5$, a contradiction.

\if 10

By construction 

Next, we define two subgraphs of $C$ and we show that their orders will lead to a contradiction.
Let $P = H_1[V(C) \cap V(H_1)]$ and $P' = H_2[V(C) \cap V(H_2)]$.

Since $C$ is an induced cycle in $G$, we have that $P$ is a disjoint union of induced paths in $H_1$. Let $P = (P^1 \cup \dots \cup P^t)$, for some $t \geq 1$. By construction, the edges in $M = \{a_ib_i, u_jv_j: 1 \leq i \leq n,  1 \leq j \leq 4 \}$ imply that $|V(P^q)| \geq 2$, for every $q \in [t]$. Furthermore, since $H_1$ is $P_4$-free, $|V(P^q)|  \leq 4$, for every $q \in [t]$. In addition, $H_1$ is $2K_2$-free, then we obtain that $t = 1$.

Thus, we conclude that $P \simeq P_{r}$, with $r \leq 4$, and with similar reasoning, we have that $P' \simeq P_{s}$, with $s \leq 4$.
Since $|V(C)| = r + s \leq 8$, we have a contradiction. Hence $G$ is $C_{k+9}$-free.

\fi

(II) Suppose by contradiction that $G$ contains a $\overline{C}_{k\geq 7}$, denoted as $D$, as induced subgraph. 
Let $V(D) = \{d_1, \dots, d_\ell\}$, for some $\ell \geq 7$, and $E(D) = \{d_id_j : 1 \leq i < j \leq \ell\} \setminus ( \{d_id_{i+1} : 1 \leq i \leq \ell-1 \} \cup \{d_\ell d_1\})$.

By definition of $D$, $\{d_1, d_2, d_4, d_5\}$ induces a $C_4$. Then, since $H_1$ and $H_2$ are split graphs, $V(D) \not\subseteq V(H_1)$ and $V(D) \not\subseteq V(H_2)$. So, we assume that $V(D) \cap V(H_1) \neq \emptyset$ and $V(D) \cap V(H_2) \neq \emptyset$. Then, there exists $i, j \in [\ell]$ such that $d_i \in V(H_1)$, $d_j \in V(H_2)$ and $d_id_j \in E(D)$.

Without loss of generality, suppose that $i = 1$. 
Since $\{d_1, d_3, d_5\}$ induces a $K_3$, we may assume that $\{d_1, d_3, d_5\} \subseteq V(H_1)$.
Thus, $d_1d_j \in E(D)$, for some $j \in \{4,6, \dots, \ell-1\}$. Notice that, if $j = 4$ (resp. $j \geq 6$), then $\{d_1, d_4, d_6\}$ (resp. $\{d_1, d_3, d_j\}$) induces a $K_3$ which intersects both $V(H_1)$ and $V(H_2)$, a contradiction. Therefore $G$ is $\overline{C}_{k\geq 7}$-free.\qed \end{proof}

In what follows, we prove that $(A,B)$ is a \emph{yes}-instance of \textsc{Graph Isomorphism} if and only if $G$ is a \emph{yes}-instance of $\textsc{Comp-Sub}(\mathcal{PM})$.

Suppose that $A \simeq B$. Since $I_n = \overline{K}_n$, $  \overline{\overline{B}} \simeq A$, and there is no edge between a vertex in $I$ and a vertex in $V(\overline{B})$, it is easy to see that $H_1$ and $\overline{H}_2$ are isomorphic. Therefore, $G$ is a \emph{yes}-instance of $\textsc{Comp-Sub}(\mathcal{PM})$.

For the converse, we suppose that $G$ is a \emph{yes}-instance of $\textsc{Comp-Sub}(\mathcal{PM})$. Let $(V',V'')$ be a partition of $V(G)$ into complementary parts such that $[V',V'']$ is a perfect matching. Since $I$ is a set of vertices with degree one in $G$ and $A$ is connected, it holds that either ($I\subset V'$ and $V(A)\subset V''$) or ($V(A)\subset V'$ and $I\subset V''$). Suppose that $V(A)\subset V'$. This implies that $V'=V(A)\cup K$ and $V''=V(\overline{B})\cup I$. Since $G[V']$ and $G[V'']$ are complementary, we have that $G[V'] \simeq \overline{G[V'']}$. Hence, due to the automorphism of universal vertices, it holds that $A\simeq B$.  
\qed \end{proof}

See in Figure~\ref{fig:construcao_ueverton} an example of the construction presented in Theorem~\ref{theo:perfHardness}.

\begin{figure}[htb]
    \centering
{\setlength{\fboxsep}{9pt}
\setlength{\fboxrule}{0.3pt}
\fbox{

\tikzset{every picture/.style={line width=0.5pt}} 

\begin{tikzpicture}[x=0.5pt,y=0.5pt,yscale=-1,xscale=1]

\draw [color={rgb, 255:red, 0; green, 0; blue, 0 }  ,draw opacity=1 ][fill={rgb, 255:red, 0; green, 0; blue, 0 }  ,fill opacity=1 ]   (198.27,63.09) -- (252.14,77.81) ;
\draw  [color={rgb, 255:red, 0; green, 0; blue, 0 }  ,draw opacity=1 ][fill={rgb, 255:red, 0; green, 0; blue, 0 }  ,fill opacity=1 ] (193.27,63.09) .. controls (193.27,60.33) and (195.51,58.09) .. (198.27,58.09) .. controls (201.03,58.09) and (203.27,60.33) .. (203.27,63.09) .. controls (203.27,65.85) and (201.03,68.09) .. (198.27,68.09) .. controls (195.51,68.09) and (193.27,65.85) .. (193.27,63.09) -- cycle ;
\draw  [color={rgb, 255:red, 0; green, 0; blue, 0 }  ,draw opacity=1 ][fill={rgb, 255:red, 0; green, 0; blue, 0 }  ,fill opacity=1 ] (215.35,112.2) .. controls (215.35,109.43) and (217.58,107.2) .. (220.35,107.2) .. controls (223.11,107.2) and (225.35,109.43) .. (225.35,112.2) .. controls (225.35,114.96) and (223.11,117.2) .. (220.35,117.2) .. controls (217.58,117.2) and (215.35,114.96) .. (215.35,112.2) -- cycle ;
\draw  [color={rgb, 255:red, 0; green, 0; blue, 0 }  ,draw opacity=1 ][fill={rgb, 255:red, 0; green, 0; blue, 0 }  ,fill opacity=1 ] (247.14,77.81) .. controls (247.14,75.05) and (249.38,72.81) .. (252.14,72.81) .. controls (254.9,72.81) and (257.14,75.05) .. (257.14,77.81) .. controls (257.14,80.57) and (254.9,82.81) .. (252.14,82.81) .. controls (249.38,82.81) and (247.14,80.57) .. (247.14,77.81) -- cycle ;
\draw  [color={rgb, 255:red, 0; green, 0; blue, 0 }  ,draw opacity=1 ][fill={rgb, 255:red, 0; green, 0; blue, 0 }  ,fill opacity=1 ] (337.8,64.86) .. controls (337.8,62.09) and (340.04,59.86) .. (342.8,59.86) .. controls (345.57,59.86) and (347.8,62.09) .. (347.8,64.86) .. controls (347.8,67.62) and (345.57,69.86) .. (342.8,69.86) .. controls (340.04,69.86) and (337.8,67.62) .. (337.8,64.86) -- cycle ;
\draw  [color={rgb, 255:red, 0; green, 0; blue, 0 }  ,draw opacity=1 ][fill={rgb, 255:red, 0; green, 0; blue, 0 }  ,fill opacity=1 ] (359.88,113.96) .. controls (359.88,111.2) and (362.12,108.96) .. (364.88,108.96) .. controls (367.64,108.96) and (369.88,111.2) .. (369.88,113.96) .. controls (369.88,116.72) and (367.64,118.96) .. (364.88,118.96) .. controls (362.12,118.96) and (359.88,116.72) .. (359.88,113.96) -- cycle ;
\draw  [color={rgb, 255:red, 0; green, 0; blue, 0 }  ,draw opacity=1 ][fill={rgb, 255:red, 0; green, 0; blue, 0 }  ,fill opacity=1 ] (391.68,79.58) .. controls (391.68,76.81) and (393.91,74.58) .. (396.68,74.58) .. controls (399.44,74.58) and (401.68,76.81) .. (401.68,79.58) .. controls (401.68,82.34) and (399.44,84.58) .. (396.68,84.58) .. controls (393.91,84.58) and (391.68,82.34) .. (391.68,79.58) -- cycle ;
\draw  [color={rgb, 255:red, 0; green, 0; blue, 0 }  ,draw opacity=1 ][fill={rgb, 255:red, 0; green, 0; blue, 0 }  ,fill opacity=1 ] (193.21,189.38) .. controls (193.21,186.62) and (195.45,184.38) .. (198.21,184.38) .. controls (200.97,184.38) and (203.21,186.62) .. (203.21,189.38) .. controls (203.21,192.15) and (200.97,194.38) .. (198.21,194.38) .. controls (195.45,194.38) and (193.21,192.15) .. (193.21,189.38) -- cycle ;
\draw  [color={rgb, 255:red, 0; green, 0; blue, 0 }  ,draw opacity=1 ][fill={rgb, 255:red, 0; green, 0; blue, 0 }  ,fill opacity=1 ] (215.28,238.49) .. controls (215.28,235.73) and (217.52,233.49) .. (220.28,233.49) .. controls (223.04,233.49) and (225.28,235.73) .. (225.28,238.49) .. controls (225.28,241.25) and (223.04,243.49) .. (220.28,243.49) .. controls (217.52,243.49) and (215.28,241.25) .. (215.28,238.49) -- cycle ;
\draw  [color={rgb, 255:red, 0; green, 0; blue, 0 }  ,draw opacity=1 ][fill={rgb, 255:red, 0; green, 0; blue, 0 }  ,fill opacity=1 ] (247.08,204.1) .. controls (247.08,201.34) and (249.32,199.1) .. (252.08,199.1) .. controls (254.84,199.1) and (257.08,201.34) .. (257.08,204.1) .. controls (257.08,206.86) and (254.84,209.1) .. (252.08,209.1) .. controls (249.32,209.1) and (247.08,206.86) .. (247.08,204.1) -- cycle ;
\draw  [color={rgb, 255:red, 0; green, 0; blue, 0 }  ,draw opacity=1 ][fill={rgb, 255:red, 0; green, 0; blue, 0 }  ,fill opacity=1 ] (334.24,187.74) .. controls (334.24,184.98) and (336.48,182.74) .. (339.24,182.74) .. controls (342,182.74) and (344.24,184.98) .. (344.24,187.74) .. controls (344.24,190.5) and (342,192.74) .. (339.24,192.74) .. controls (336.48,192.74) and (334.24,190.5) .. (334.24,187.74) -- cycle ;
\draw  [color={rgb, 255:red, 0; green, 0; blue, 0 }  ,draw opacity=1 ][fill={rgb, 255:red, 0; green, 0; blue, 0 }  ,fill opacity=1 ] (356.32,236.84) .. controls (356.32,234.08) and (358.55,231.84) .. (361.32,231.84) .. controls (364.08,231.84) and (366.32,234.08) .. (366.32,236.84) .. controls (366.32,239.61) and (364.08,241.84) .. (361.32,241.84) .. controls (358.55,241.84) and (356.32,239.61) .. (356.32,236.84) -- cycle ;
\draw  [color={rgb, 255:red, 0; green, 0; blue, 0 }  ,draw opacity=1 ][fill={rgb, 255:red, 0; green, 0; blue, 0 }  ,fill opacity=1 ] (388.11,202.46) .. controls (388.11,199.7) and (390.35,197.46) .. (393.11,197.46) .. controls (395.87,197.46) and (398.11,199.7) .. (398.11,202.46) .. controls (398.11,205.22) and (395.87,207.46) .. (393.11,207.46) .. controls (390.35,207.46) and (388.11,205.22) .. (388.11,202.46) -- cycle ;
\draw [color={rgb, 255:red, 0; green, 0; blue, 0 }  ,draw opacity=1 ][fill={rgb, 255:red, 0; green, 0; blue, 0 }  ,fill opacity=1 ]   (220.35,112.2) -- (252.14,77.81) ;
\draw [color={rgb, 255:red, 0; green, 0; blue, 0 }  ,draw opacity=1 ][fill={rgb, 255:red, 0; green, 0; blue, 0 }  ,fill opacity=1 ]   (198.21,189.38) -- (220.28,238.49) ;
\draw [color={rgb, 255:red, 0; green, 0; blue, 0 }  ,draw opacity=1 ][fill={rgb, 255:red, 0; green, 0; blue, 0 }  ,fill opacity=1 ]   (198.21,189.38) -- (252.08,204.1) ;
\draw [color={rgb, 255:red, 0; green, 0; blue, 0 }  ,draw opacity=1 ][fill={rgb, 255:red, 0; green, 0; blue, 0 }  ,fill opacity=1 ]   (220.28,238.49) -- (252.08,204.1) ;
\draw [color={rgb, 255:red, 0; green, 0; blue, 0 }  ,draw opacity=1 ][fill={rgb, 255:red, 0; green, 0; blue, 0 }  ,fill opacity=1 ]   (393.11,202.46) -- (361.32,236.84) ;
\draw [color={rgb, 255:red, 0; green, 0; blue, 0 }  ,draw opacity=1 ][fill={rgb, 255:red, 0; green, 0; blue, 0 }  ,fill opacity=1 ]   (198.21,189.38) -- (198.27,68.09) ;
\draw [color={rgb, 255:red, 0; green, 0; blue, 0 }  ,draw opacity=1 ][fill={rgb, 255:red, 0; green, 0; blue, 0 }  ,fill opacity=1 ]   (220.28,233.49) -- (220.35,112.2) ;
\draw [color={rgb, 255:red, 0; green, 0; blue, 0 }  ,draw opacity=1 ][fill={rgb, 255:red, 0; green, 0; blue, 0 }  ,fill opacity=1 ]   (252.08,204.1) -- (252.14,82.81) ;
\draw [color={rgb, 255:red, 0; green, 0; blue, 0 }  ,draw opacity=1 ][fill={rgb, 255:red, 0; green, 0; blue, 0 }  ,fill opacity=1 ]   (220.28,233.49) -- (198.27,63.09) ;
\draw [color={rgb, 255:red, 0; green, 0; blue, 0 }  ,draw opacity=1 ][fill={rgb, 255:red, 0; green, 0; blue, 0 }  ,fill opacity=1 ]   (220.28,233.49) -- (252.14,82.81) ;
\draw [color={rgb, 255:red, 0; green, 0; blue, 0 }  ,draw opacity=1 ][fill={rgb, 255:red, 0; green, 0; blue, 0 }  ,fill opacity=1 ]   (198.21,194.38) -- (220.35,112.2) ;
\draw [color={rgb, 255:red, 0; green, 0; blue, 0 }  ,draw opacity=1 ][fill={rgb, 255:red, 0; green, 0; blue, 0 }  ,fill opacity=1 ]   (252.08,204.1) -- (220.35,112.2) ;
\draw [color={rgb, 255:red, 0; green, 0; blue, 0 }  ,draw opacity=1 ][fill={rgb, 255:red, 0; green, 0; blue, 0 }  ,fill opacity=1 ]   (198.21,194.38) -- (252.14,77.81) ;
\draw [color={rgb, 255:red, 0; green, 0; blue, 0 }  ][line width=0.75] [line join = round][line cap = round]   (200.48,64.29) .. controls (205.37,64.97) and (208.47,70.26) .. (211.54,74.12) .. controls (229.95,97.17) and (238.16,107.74) .. (244.22,139.51) .. controls (250.9,174.57) and (251.53,206.17) .. (252.08,204.1) ;
\draw [color={rgb, 255:red, 0; green, 0; blue, 0 }  ,draw opacity=1 ][fill={rgb, 255:red, 0; green, 0; blue, 0 }  ,fill opacity=1 ]   (200.48,64.29) -- (342.8,64.86) ;
\draw [color={rgb, 255:red, 0; green, 0; blue, 0 }  ,draw opacity=1 ][fill={rgb, 255:red, 0; green, 0; blue, 0 }  ,fill opacity=1 ]   (254.35,79.01) -- (396.68,79.58) ;
\draw [color={rgb, 255:red, 0; green, 0; blue, 0 }  ,draw opacity=1 ][fill={rgb, 255:red, 0; green, 0; blue, 0 }  ,fill opacity=1 ]   (222.56,113.4) -- (364.88,113.96) ;
\draw [color={rgb, 255:red, 0; green, 0; blue, 0 }  ,draw opacity=1 ][fill={rgb, 255:red, 0; green, 0; blue, 0 }  ,fill opacity=1 ]   (198.21,189.38) -- (340.53,189.95) ;
\draw [color={rgb, 255:red, 0; green, 0; blue, 0 }  ,draw opacity=1 ][fill={rgb, 255:red, 0; green, 0; blue, 0 }  ,fill opacity=1 ]   (252.08,204.1) -- (394.4,204.67) ;
\draw [color={rgb, 255:red, 0; green, 0; blue, 0 }  ,draw opacity=1 ][fill={rgb, 255:red, 0; green, 0; blue, 0 }  ,fill opacity=1 ]   (220.28,238.49) -- (362.6,239.06) ;
\draw   (175.68,64.77) .. controls (175.68,55.73) and (183.01,48.4) .. (192.05,48.4) -- (259.43,48.4) .. controls (268.47,48.4) and (275.8,55.73) .. (275.8,64.77) -- (275.8,113.89) .. controls (275.8,122.93) and (268.47,130.26) .. (259.43,130.26) -- (192.05,130.26) .. controls (183.01,130.26) and (175.68,122.93) .. (175.68,113.89) -- cycle ;
\draw   (314.8,186.71) .. controls (314.8,177.75) and (322.06,170.49) .. (331.02,170.49) -- (397.8,170.49) .. controls (406.76,170.49) and (414.03,177.75) .. (414.03,186.71) -- (414.03,235.39) .. controls (414.03,244.35) and (406.76,251.61) .. (397.8,251.61) -- (331.02,251.61) .. controls (322.06,251.61) and (314.8,244.35) .. (314.8,235.39) -- cycle ;
\draw   (174.8,206.49) .. controls (174.8,179.32) and (196.82,157.3) .. (223.99,157.3) .. controls (251.16,157.3) and (273.19,179.32) .. (273.19,206.49) .. controls (273.19,233.66) and (251.16,255.69) .. (223.99,255.69) .. controls (196.82,255.69) and (174.8,233.66) .. (174.8,206.49) -- cycle ;
\draw   (315.8,88.01) .. controls (315.8,60.68) and (337.95,38.53) .. (365.28,38.53) .. controls (392.6,38.53) and (414.76,60.68) .. (414.76,88.01) .. controls (414.76,115.33) and (392.6,137.48) .. (365.28,137.48) .. controls (337.95,137.48) and (315.8,115.33) .. (315.8,88.01) -- cycle ;

\draw (147,76.61) node [anchor=north west][inner sep=0.75pt]    {$A$};
\draw (427.46,197.46) node [anchor=north west][inner sep=0.75pt]    {$\overline{B}$};
\draw (141.88,202.94) node [anchor=north west][inner sep=0.75pt]    {$K_{n}$};
\draw (423.44,69) node [anchor=north west][inner sep=0.75pt]    {$I_{n}$};
\draw (223,14.61) node [anchor=north west][inner sep=0.75pt]    {$H_{1}$};
\draw (359,15.4) node [anchor=north west][inner sep=0.75pt]    {$H_{2}$};
\draw (100,17.61) node [anchor=north west][inner sep=0.75pt]    {$G$};

\end{tikzpicture}

}}
    \caption{Graph $G$ constructed for Theorem~\ref{theo:perfHardness}.}
    \label{fig:construcao_ueverton}
\end{figure}
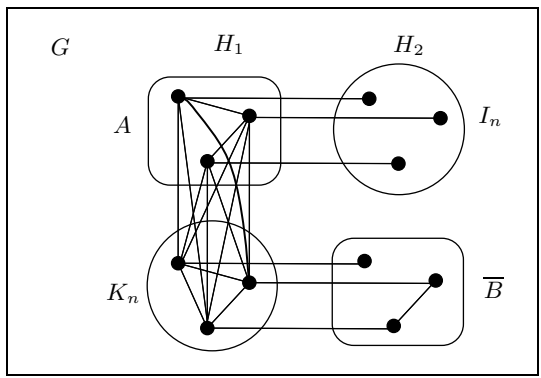

\if 10

Our next result shows that $\textsc{Comp-Sub}(\mathcal{PM})$ is $\GI$-hard for $C_5$-free graphs. Notice that the construction of $G$ presented in Theorem~\ref{theo:perfHardness} allows the existence of $C_5$ as induced subgraph. For instance, if $\overline{B}$ contains an induced $P_3 = v_1v_2v_3$, the corresponding vertices of $v_1$ and $v_3$ (in $K$) union with $v_1, v_2, v_3$ induce a $C_5$ in $G$. 
Hence, we adapt that construction to ensure a $C_5$-free graph.

\begin{theorem}\label{theo:perfHardnessC5} 
$\textsc{Comp-Sub}(\mathcal{PM})$ is $\GI$-hard on $C_5$-free graphs.
\end{theorem}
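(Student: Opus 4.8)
The plan is to reduce, as in the proof of Theorem~\ref{theo:perfHardness}, from \textsc{Graph Isomorphism} on connected split graphs, but to modify the construction so that the output graph has no induced $C_5$. I would preserve the global shape: from an instance $(A,B)$ build $G=G_1\sqcup G_2+M$, where $G_1$ depends only on $A$, $G_2$ depends only on $\overline{B}$, the edge cut $M=[V(G_1),V(G_2)]$ is a perfect matching, and $G_1\simeq\overline{G_2}$ if and only if $A\simeq B$ — so that, by the same arguments as in Theorem~\ref{theo:perfHardness}, $G$ admits a complementary decomposition with perfect matching cut iff $A\simeq B$. The new requirement is to choose $G_1$, $G_2$ and $M$ so that $M$ never matches a pair of vertices lying at distance two in one side to an adjacent pair in the other side; this is precisely the configuration producing the induced $C_5$ in Theorem~\ref{theo:perfHardness} (an induced $P_3$ of $\overline{B}$ whose two endpoints are matched into the clique $K$, where they become adjacent). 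Since only $C_5$ is forbidden, one is free to introduce longer induced cycles, so critical cross-links may be ``stretched'' to turn a would-be $C_5$ into a $C_7$ or longer.

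The first main step is to prove $G$ is $C_5$-free. Each of $G_1,G_2$ is $C_5$-free: it is built from split graphs (which are $C_5$-free) using joins and disjoint unions, and both operations preserve $C_5$-freeness — for $X\cup Y$ a $C_5$ is connected and hence lies in one summand, and for $X+Y$ a vertex of an induced $C_5$ on the $X$-side would be adjacent to the whole $Y$-side, forcing each side of the $C_5$ to have at most two vertices, a contradiction. Hence any induced $C_5$ of $G$ meets both $G_1$ and $G_2$; since $M$ is a matching it uses an even, hence exactly two, number of edges of $M$, and deleting these two edges shows that the $C_5$ consists of an induced path $x{-}z{-}y$ inside some $G_i$ together with an edge $x'y'$ inside $G_{3-i}$, with $x,y$ matched by $M$ to $x',y'$. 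As the endpoints of an induced $P_3$ are exactly a pair at distance two, it remains to check that $M$ never links a distance-two pair of one side to an edge of the other — which is what the redesign must guarantee, using (i) that the vertices into which the dense, ``rigidifying'' gadget is matched are made to have degree one (and thus cannot lie on any induced cycle), and (ii) that the remaining cross-links between $V(A)$ and $V(\overline{B})$ are stretched by a short buffer, so that any $P_3$-versus-edge configuration through them closes into a hole of length $\ge 7$.

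For the forward implication, if $A\simeq B$ then $\overline{G_2}\simeq G_1$, so the designated partition $(V(G_1),V(G_2))$ together with the perfect matching $M$ witnesses $G\in\textsc{Comp-Sub}(\mathcal{PM})$. For the converse, given any complementary decomposition of $G$ with perfect matching cut, I would argue as in Theorem~\ref{theo:perfHardness} that it must be the intended partition: the pendant and low-degree vertices have a forced side, and a vertex adjacent to at least two vertices of a connected block cannot be separated from that block by a matching cut, which pins down the whole partition; then $G_1\simeq\overline{G_2}$, and by the rigidity of the auxiliary gadget (e.g.\ universal vertices, which must map to universal vertices, or a unique clique component of its size) this forces $A\simeq B$.

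The hard part will be the $C_5$-freeness step: designing $G_1,G_2$ and $M$ that simultaneously destroy every induced $C_5$ and still encode $A$ and $B$ faithfully. The tension is that the gadget making the decomposition rigid is necessarily dense, and wherever a dense set crosses the cut it tends to be matched to a distance-two pair on the other side, re-creating a $C_5$; the resolution is to place the dense gadget so that it is matched only to degree-one vertices, and to buffer the remaining cross-links — after inserting dummy vertices to rebalance the two sides, so that the cut remains a perfect matching.
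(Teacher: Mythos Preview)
Your structural analysis of how an induced $C_5$ can cross a perfect matching cut --- exactly two cut edges, giving an induced $P_3$ on one side whose endpoints are matched to an edge on the other --- is correct and is the right lemma to isolate. But the proposal stops short of an actual construction: you list constraints the redesigned $G_1, G_2, M$ must satisfy (match the dense gadget only to pendant vertices; buffer the remaining links so any would-be $C_5$ becomes a $C_{\ge 7}$), yet give no concrete $G_1, G_2$ meeting them while simultaneously keeping the cut a perfect matching and preserving the equivalence ``$G_1 \simeq \overline{G_2}$ iff $A \simeq B$''. You flag this yourself as the hard part; as written the proposal is a plan, not a proof, and the buffering-plus-rebalancing step is exactly where such plans tend to break (the added dummy vertices must themselves be matched and must not reintroduce distance-two pairs opposite an edge).

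The paper sidesteps the whole buffering problem with a genuinely different idea. Instead of keeping $A$ on one side and $\overline{B}$ on the other, it places \emph{both} inside a single graph $H$ built from four blocks $A$, $K_n$, $\overline{B}$, $I_n$ with joins $A+K_n$, $K_n+\overline{B}$, $\overline{B}+I_n$; since $P_4$ is self-complementary and $\overline{K_n}=I_n$, this $H$ is self-complementary exactly when $A\simeq B$. The instance $G$ is then two \emph{identical} copies $H$, $H'$ linked by the perfect matching $\{vv':v\in V(H)\}$. Now your own $P_3$-versus-edge criterion kills every crossing $C_5$ for free: if $xzy$ is an induced $P_3$ in $H$ then $xy\notin E(H)$, hence $x'y'\notin E(H')$ because $H'$ is a copy of $H$. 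All that remains is that $H$ itself is $C_5$-free, a short case analysis on the join structure. What this trick buys is precisely the step you left open: with identical sides there is no tension between the density of the rigidifying gadget and $C_5$-freeness across the cut, so no buffers, no rebalancing, and no new rigidity analysis are needed.
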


\if 10
\begin{proof} \textit{(Sketch)}
As in Theorem~\ref{theo:perfHardness}, we reduce \textsc{Graph Isomorphism} for split graphs to $\textsc{Comp-Sub}(\mathcal{PM})$. From an instance $A,B$ of \textsc{Graph Isomorphism}, we construct an instance $G$ of $\textsc{Comp-Sub}(\Pi)$ as follows.
\begin{itemize}

    \item Create a graph $H$ from the disjoint union of the graphs $A$, $\overline{B}$, $K_n$, $I_n$. 
    
    \item Add to $E(H)$ all the edges to make the following joins: $A + K_n$, $K_n + \overline{B}$, $\overline{B} + I_n$.

    \item Create a graph $H'$ by a copy of $H$. To distinguish vertices from $H$ and $H'$ we let $v' \in V(H')$ be the corresponding vertex of $v \in V(H)$. 
    
    \item Finally, let the graph $G$ arise from the disjoint union of $H$ and $H'$ by the addition of the edges $vv' \in E(G)$, for every $v \in V(H)$.
\end{itemize}

It is easy to see that $G$ can be obtained in polynomial time. Given that construction, it is possible to show that $G$ is a $C_5$-free graph and $A \simeq B$ if and only if $G$ admits a complementary decomposition $(H,H')$.
The complete proof is deferred to Appendix.
\qed \end{proof}

\fi

\begin{proof}
As in Theorem~\ref{theo:perfHardness}, we reduce \textsc{Graph Isomorphism} on split graphs to $\textsc{Comp-Sub}(\mathcal{PM})$. 

Let $A$ and $B$ be split graphs. We may assume that $A$ and $B$ does not contain isolated vertices and $|V(A)| = |V(B)| = n$, for some $n \geq 3$.
From an instance $A,B$ of \textsc{Graph Isomorphism}, we construct an instance $G$.

\begin{itemize}
    \item Create a graph $H$ from the disjoint union of the graphs $A$, $\overline{B}$, $K_n$, $I_n$. 
    
    \item Add to $E(H)$ all the edges to make the following joins: $A + K_n$, $K_n + \overline{B}$, $\overline{B} + I_n$,  Figure~\ref{fig:construcao2} contains an example of graph $H$. 
    
    \item Create a graph $H'$ by a copy of $H$. To distinguish vertices from $H$ and $H'$ we let $v' \in V(H')$ be the \textit{corresponding vertex} of $v \in V(H)$. 
    
    \item Finally, let the graph $G$ arise from the disjoint union between $H$ and $H'$ by the addition of the edges $vv' \in E(G)$, for every $v \in V(H)$.
\end{itemize}

Clearly the construction can be done in polynomial time. Next, we show that $G$ is a $C_5$-free graph. By contradiction, suppose that $G$ has a $C_5$, denoted as $C$, as an induced subgraph.

First, notice that, for every $u,v \in V(H)$, we have $uv \in E(H)$ if and only if $u'v' \in E(H')$. This implies that either $V(C) \subseteq V(H)$ or $V(C) \subseteq V(H')$. We may consider $V(C) \subseteq V(H)$, since the proof for $V(C) \subseteq V(H')$ is identical.
Furthermore, $A$, $\overline{B}$, $K_n$ and $I_n$ are $C_5$-free graphs, then we assume that $V(C)$ intersects at least two sets in $\{V(A), V(B), V(K_n), V(I_n) \}$.

Since $K_n$ is a complete graph, the joins $A + K_n $ and $K_n + \overline{B}$, imply that $|V(K_n) \cap C| \leq 1$. 
Similarly, since $I_n$ is an independent set, the join $\overline{B} + I_n$ implies that $|V(I_n) \cap C| \leq 1$.
Notice that $|V(K_n) \cap C| = 1$ implies that either $V(C) \subseteq V(A) \cup V(K_n)$ or $V(C) \subseteq V(K_n) \cup V(\overline{B}) \cup V(I_n)$.
In the former,  $|V(A) \cap C| \geq 4$, then the construction (recall the join $A + K_n$) implies that $C$ is not an induced cycle, a contradiction. In the latter, $|V(\overline{B}) \cap C| \geq 3$, and again, the construction (recall the join $K_n + \overline{B}$)
implies that $C$ is not an induced cycle, a contradiction. 
Finally, if $|V(K_n) \cap C| = 0$, we obtain that $V(C) \subseteq V(\overline{B}) \cup V(I_n)$  and $|V(\overline{B}) \cap C| \geq 4$. Again, the construction (recall the join $\overline{B} + I_n$)
implies that $C$ is not an induced cycle, a contradiction. Hence, $G$ is  $C_5$-free.

To show that $A \simeq B$ if and only if $G$ is a \emph{yes}-instance of $\textsc{Comp-Sub}(\Pi)$, it suffices to show that $H$ is self-complementary whenever $A \simeq B$. To this end, notice that $P_4$ is self-complementary,  $\overline{K}_n = I_n$, and $\overline{I}_n = K_n$. Furthermore, it is clear that  $\overline{A} = \overline{B}$ and  $\overline{\overline{B}} = A$ if and only if $A \simeq B$.
\qed \end{proof}

\begin{figure}
    \centering
{\setlength{\fboxsep}{9pt}
\setlength{\fboxrule}{0.3pt}
\fbox{

\tikzset{every picture/.style={line width=0.5pt}} 

\begin{tikzpicture}[x=0.44pt,y=0.44pt,yscale=-1,xscale=1]

\draw [color={rgb, 255:red, 0; green, 0; blue, 0 }  ,draw opacity=1 ][fill={rgb, 255:red, 0; green, 0; blue, 0 }  ,fill opacity=1 ]   (218.27,91.88) -- (272.14,106.6) ;
\draw  [color={rgb, 255:red, 0; green, 0; blue, 0 }  ,draw opacity=1 ][fill={rgb, 255:red, 0; green, 0; blue, 0 }  ,fill opacity=1 ] (213.27,91.88) .. controls (213.27,89.12) and (215.51,86.88) .. (218.27,86.88) .. controls (221.03,86.88) and (223.27,89.12) .. (223.27,91.88) .. controls (223.27,94.64) and (221.03,96.88) .. (218.27,96.88) .. controls (215.51,96.88) and (213.27,94.64) .. (213.27,91.88) -- cycle ;
\draw  [color={rgb, 255:red, 0; green, 0; blue, 0 }  ,draw opacity=1 ][fill={rgb, 255:red, 0; green, 0; blue, 0 }  ,fill opacity=1 ] (235.35,140.98) .. controls (235.35,138.22) and (237.58,135.98) .. (240.35,135.98) .. controls (243.11,135.98) and (245.35,138.22) .. (245.35,140.98) .. controls (245.35,143.75) and (243.11,145.98) .. (240.35,145.98) .. controls (237.58,145.98) and (235.35,143.75) .. (235.35,140.98) -- cycle ;
\draw  [color={rgb, 255:red, 0; green, 0; blue, 0 }  ,draw opacity=1 ][fill={rgb, 255:red, 0; green, 0; blue, 0 }  ,fill opacity=1 ] (267.14,106.6) .. controls (267.14,103.84) and (269.38,101.6) .. (272.14,101.6) .. controls (274.9,101.6) and (277.14,103.84) .. (277.14,106.6) .. controls (277.14,109.36) and (274.9,111.6) .. (272.14,111.6) .. controls (269.38,111.6) and (267.14,109.36) .. (267.14,106.6) -- cycle ;
\draw  [color={rgb, 255:red, 0; green, 0; blue, 0 }  ,draw opacity=1 ][fill={rgb, 255:red, 0; green, 0; blue, 0 }  ,fill opacity=1 ] (357.8,93.65) .. controls (357.8,90.88) and (360.04,88.65) .. (362.8,88.65) .. controls (365.57,88.65) and (367.8,90.88) .. (367.8,93.65) .. controls (367.8,96.41) and (365.57,98.65) .. (362.8,98.65) .. controls (360.04,98.65) and (357.8,96.41) .. (357.8,93.65) -- cycle ;
\draw  [color={rgb, 255:red, 0; green, 0; blue, 0 }  ,draw opacity=1 ][fill={rgb, 255:red, 0; green, 0; blue, 0 }  ,fill opacity=1 ] (379.88,142.75) .. controls (379.88,139.99) and (382.12,137.75) .. (384.88,137.75) .. controls (387.64,137.75) and (389.88,139.99) .. (389.88,142.75) .. controls (389.88,145.51) and (387.64,147.75) .. (384.88,147.75) .. controls (382.12,147.75) and (379.88,145.51) .. (379.88,142.75) -- cycle ;
\draw  [color={rgb, 255:red, 0; green, 0; blue, 0 }  ,draw opacity=1 ][fill={rgb, 255:red, 0; green, 0; blue, 0 }  ,fill opacity=1 ] (411.68,108.36) .. controls (411.68,105.6) and (413.91,103.36) .. (416.68,103.36) .. controls (419.44,103.36) and (421.68,105.6) .. (421.68,108.36) .. controls (421.68,111.13) and (419.44,113.36) .. (416.68,113.36) .. controls (413.91,113.36) and (411.68,111.13) .. (411.68,108.36) -- cycle ;
\draw  [color={rgb, 255:red, 0; green, 0; blue, 0 }  ,draw opacity=1 ][fill={rgb, 255:red, 0; green, 0; blue, 0 }  ,fill opacity=1 ] (213.21,218.17) .. controls (213.21,215.41) and (215.45,213.17) .. (218.21,213.17) .. controls (220.97,213.17) and (223.21,215.41) .. (223.21,218.17) .. controls (223.21,220.93) and (220.97,223.17) .. (218.21,223.17) .. controls (215.45,223.17) and (213.21,220.93) .. (213.21,218.17) -- cycle ;
\draw  [color={rgb, 255:red, 0; green, 0; blue, 0 }  ,draw opacity=1 ][fill={rgb, 255:red, 0; green, 0; blue, 0 }  ,fill opacity=1 ] (235.28,267.28) .. controls (235.28,264.52) and (237.52,262.28) .. (240.28,262.28) .. controls (243.04,262.28) and (245.28,264.52) .. (245.28,267.28) .. controls (245.28,270.04) and (243.04,272.28) .. (240.28,272.28) .. controls (237.52,272.28) and (235.28,270.04) .. (235.28,267.28) -- cycle ;
\draw  [color={rgb, 255:red, 0; green, 0; blue, 0 }  ,draw opacity=1 ][fill={rgb, 255:red, 0; green, 0; blue, 0 }  ,fill opacity=1 ] (267.08,232.89) .. controls (267.08,230.13) and (269.32,227.89) .. (272.08,227.89) .. controls (274.84,227.89) and (277.08,230.13) .. (277.08,232.89) .. controls (277.08,235.65) and (274.84,237.89) .. (272.08,237.89) .. controls (269.32,237.89) and (267.08,235.65) .. (267.08,232.89) -- cycle ;
\draw  [color={rgb, 255:red, 0; green, 0; blue, 0 }  ,draw opacity=1 ][fill={rgb, 255:red, 0; green, 0; blue, 0 }  ,fill opacity=1 ] (354.24,216.53) .. controls (354.24,213.77) and (356.48,211.53) .. (359.24,211.53) .. controls (362,211.53) and (364.24,213.77) .. (364.24,216.53) .. controls (364.24,219.29) and (362,221.53) .. (359.24,221.53) .. controls (356.48,221.53) and (354.24,219.29) .. (354.24,216.53) -- cycle ;
\draw  [color={rgb, 255:red, 0; green, 0; blue, 0 }  ,draw opacity=1 ][fill={rgb, 255:red, 0; green, 0; blue, 0 }  ,fill opacity=1 ] (376.32,265.63) .. controls (376.32,262.87) and (378.55,260.63) .. (381.32,260.63) .. controls (384.08,260.63) and (386.32,262.87) .. (386.32,265.63) .. controls (386.32,268.4) and (384.08,270.63) .. (381.32,270.63) .. controls (378.55,270.63) and (376.32,268.4) .. (376.32,265.63) -- cycle ;
\draw  [color={rgb, 255:red, 0; green, 0; blue, 0 }  ,draw opacity=1 ][fill={rgb, 255:red, 0; green, 0; blue, 0 }  ,fill opacity=1 ] (408.11,231.25) .. controls (408.11,228.49) and (410.35,226.25) .. (413.11,226.25) .. controls (415.87,226.25) and (418.11,228.49) .. (418.11,231.25) .. controls (418.11,234.01) and (415.87,236.25) .. (413.11,236.25) .. controls (410.35,236.25) and (408.11,234.01) .. (408.11,231.25) -- cycle ;
\draw [color={rgb, 255:red, 0; green, 0; blue, 0 }  ,draw opacity=1 ][fill={rgb, 255:red, 0; green, 0; blue, 0 }  ,fill opacity=1 ]   (240.35,140.98) -- (272.14,106.6) ;
\draw [color={rgb, 255:red, 0; green, 0; blue, 0 }  ,draw opacity=1 ][fill={rgb, 255:red, 0; green, 0; blue, 0 }  ,fill opacity=1 ]   (218.21,218.17) -- (240.28,267.28) ;
\draw [color={rgb, 255:red, 0; green, 0; blue, 0 }  ,draw opacity=1 ][fill={rgb, 255:red, 0; green, 0; blue, 0 }  ,fill opacity=1 ]   (218.21,218.17) -- (272.08,232.89) ;
\draw [color={rgb, 255:red, 0; green, 0; blue, 0 }  ,draw opacity=1 ][fill={rgb, 255:red, 0; green, 0; blue, 0 }  ,fill opacity=1 ]   (240.28,267.28) -- (272.08,232.89) ;
\draw [color={rgb, 255:red, 0; green, 0; blue, 0 }  ,draw opacity=1 ][fill={rgb, 255:red, 0; green, 0; blue, 0 }  ,fill opacity=1 ]   (413.11,231.25) -- (381.32,265.63) ;
\draw [color={rgb, 255:red, 0; green, 0; blue, 0 }  ,draw opacity=1 ][fill={rgb, 255:red, 0; green, 0; blue, 0 }  ,fill opacity=1 ]   (218.21,218.17) -- (218.27,96.88) ;
\draw [color={rgb, 255:red, 0; green, 0; blue, 0 }  ,draw opacity=1 ][fill={rgb, 255:red, 0; green, 0; blue, 0 }  ,fill opacity=1 ]   (240.28,262.28) -- (240.35,140.98) ;
\draw [color={rgb, 255:red, 0; green, 0; blue, 0 }  ,draw opacity=1 ][fill={rgb, 255:red, 0; green, 0; blue, 0 }  ,fill opacity=1 ]   (272.08,232.89) -- (272.14,111.6) ;
\draw [color={rgb, 255:red, 0; green, 0; blue, 0 }  ,draw opacity=1 ][fill={rgb, 255:red, 0; green, 0; blue, 0 }  ,fill opacity=1 ]   (240.28,262.28) -- (218.27,91.88) ;
\draw [color={rgb, 255:red, 0; green, 0; blue, 0 }  ,draw opacity=1 ][fill={rgb, 255:red, 0; green, 0; blue, 0 }  ,fill opacity=1 ]   (240.28,262.28) -- (272.14,111.6) ;
\draw [color={rgb, 255:red, 0; green, 0; blue, 0 }  ,draw opacity=1 ][fill={rgb, 255:red, 0; green, 0; blue, 0 }  ,fill opacity=1 ]   (218.21,223.17) -- (240.35,140.98) ;
\draw [color={rgb, 255:red, 0; green, 0; blue, 0 }  ,draw opacity=1 ][fill={rgb, 255:red, 0; green, 0; blue, 0 }  ,fill opacity=1 ]   (272.08,232.89) -- (240.35,140.98) ;
\draw [color={rgb, 255:red, 0; green, 0; blue, 0 }  ,draw opacity=1 ][fill={rgb, 255:red, 0; green, 0; blue, 0 }  ,fill opacity=1 ]   (218.21,223.17) -- (272.14,106.6) ;
\draw [color={rgb, 255:red, 0; green, 0; blue, 0 }  ][line width=0.75] [line join = round][line cap = round]   (220.48,93.08) .. controls (225.37,93.76) and (228.47,99.05) .. (231.54,102.91) .. controls (249.95,125.96) and (258.16,136.53) .. (264.22,168.3) .. controls (270.9,203.36) and (271.53,234.95) .. (272.08,232.89) ;
\draw [color={rgb, 255:red, 0; green, 0; blue, 0 }  ,draw opacity=1 ][fill={rgb, 255:red, 0; green, 0; blue, 0 }  ,fill opacity=1 ]   (218.21,218.17) -- (360.53,218.74) ;
\draw [color={rgb, 255:red, 0; green, 0; blue, 0 }  ,draw opacity=1 ][fill={rgb, 255:red, 0; green, 0; blue, 0 }  ,fill opacity=1 ]   (272.08,232.89) -- (414.4,233.46) ;
\draw [color={rgb, 255:red, 0; green, 0; blue, 0 }  ,draw opacity=1 ][fill={rgb, 255:red, 0; green, 0; blue, 0 }  ,fill opacity=1 ]   (240.28,267.28) -- (382.6,267.84) ;
\draw   (195.68,93.56) .. controls (195.68,84.52) and (203.01,77.19) .. (212.05,77.19) -- (279.43,77.19) .. controls (288.47,77.19) and (295.8,84.52) .. (295.8,93.56) -- (295.8,142.68) .. controls (295.8,151.72) and (288.47,159.05) .. (279.43,159.05) -- (212.05,159.05) .. controls (203.01,159.05) and (195.68,151.72) .. (195.68,142.68) -- cycle ;
\draw   (334.8,215.5) .. controls (334.8,206.54) and (342.06,199.28) .. (351.02,199.28) -- (417.8,199.28) .. controls (426.76,199.28) and (434.03,206.54) .. (434.03,215.5) -- (434.03,264.18) .. controls (434.03,273.14) and (426.76,280.4) .. (417.8,280.4) -- (351.02,280.4) .. controls (342.06,280.4) and (334.8,273.14) .. (334.8,264.18) -- cycle ;
\draw   (194.8,235.28) .. controls (194.8,208.11) and (216.82,186.09) .. (243.99,186.09) .. controls (271.16,186.09) and (293.19,208.11) .. (293.19,235.28) .. controls (293.19,262.45) and (271.16,284.47) .. (243.99,284.47) .. controls (216.82,284.47) and (194.8,262.45) .. (194.8,235.28) -- cycle ;
\draw   (338.47,118.13) .. controls (338.47,90.8) and (360.62,68.65) .. (387.94,68.65) .. controls (415.27,68.65) and (437.42,90.8) .. (437.42,118.13) .. controls (437.42,145.45) and (415.27,167.61) .. (387.94,167.61) .. controls (360.62,167.61) and (338.47,145.45) .. (338.47,118.13) -- cycle ;
\draw [color={rgb, 255:red, 0; green, 0; blue, 0 }  ,draw opacity=1 ][fill={rgb, 255:red, 0; green, 0; blue, 0 }  ,fill opacity=1 ]   (272.08,232.89) -- (360.53,218.74) ;
\draw [color={rgb, 255:red, 0; green, 0; blue, 0 }  ,draw opacity=1 ][fill={rgb, 255:red, 0; green, 0; blue, 0 }  ,fill opacity=1 ]   (245.28,267.28) -- (360.53,218.74) ;
\draw [color={rgb, 255:red, 0; green, 0; blue, 0 }  ,draw opacity=1 ][fill={rgb, 255:red, 0; green, 0; blue, 0 }  ,fill opacity=1 ]   (218.21,218.17) -- (413.11,231.25) ;
\draw [color={rgb, 255:red, 0; green, 0; blue, 0 }  ,draw opacity=1 ][fill={rgb, 255:red, 0; green, 0; blue, 0 }  ,fill opacity=1 ]   (240.28,267.28) -- (414.4,233.46) ;
\draw [color={rgb, 255:red, 0; green, 0; blue, 0 }  ,draw opacity=1 ][fill={rgb, 255:red, 0; green, 0; blue, 0 }  ,fill opacity=1 ]   (272.08,232.89) -- (381.32,265.63) ;
\draw    (218.21,218.17) .. controls (276.87,249.12) and (269.53,244.45) .. (381.32,265.63) ;
\draw [color={rgb, 255:red, 0; green, 0; blue, 0 }  ,draw opacity=1 ][fill={rgb, 255:red, 0; green, 0; blue, 0 }  ,fill opacity=1 ]   (361.54,216.84) -- (361.6,95.55) ;
\draw [color={rgb, 255:red, 0; green, 0; blue, 0 }  ,draw opacity=1 ][fill={rgb, 255:red, 0; green, 0; blue, 0 }  ,fill opacity=1 ]   (383.62,260.95) -- (383.68,139.65) ;
\draw [color={rgb, 255:red, 0; green, 0; blue, 0 }  ,draw opacity=1 ][fill={rgb, 255:red, 0; green, 0; blue, 0 }  ,fill opacity=1 ]   (415.41,231.56) -- (416.68,108.36) ;
\draw [color={rgb, 255:red, 0; green, 0; blue, 0 }  ,draw opacity=1 ][fill={rgb, 255:red, 0; green, 0; blue, 0 }  ,fill opacity=1 ]   (383.62,260.95) -- (361.6,90.55) ;
\draw [color={rgb, 255:red, 0; green, 0; blue, 0 }  ,draw opacity=1 ][fill={rgb, 255:red, 0; green, 0; blue, 0 }  ,fill opacity=1 ]   (383.62,260.95) -- (415.47,110.27) ;
\draw [color={rgb, 255:red, 0; green, 0; blue, 0 }  ,draw opacity=1 ][fill={rgb, 255:red, 0; green, 0; blue, 0 }  ,fill opacity=1 ]   (360.53,218.74) -- (383.68,139.65) ;
\draw [color={rgb, 255:red, 0; green, 0; blue, 0 }  ,draw opacity=1 ][fill={rgb, 255:red, 0; green, 0; blue, 0 }  ,fill opacity=1 ]   (415.41,231.56) -- (383.68,139.65) ;
\draw [color={rgb, 255:red, 0; green, 0; blue, 0 }  ,draw opacity=1 ][fill={rgb, 255:red, 0; green, 0; blue, 0 }  ,fill opacity=1 ]   (360.53,218.74) -- (415.47,105.27) ;
\draw [color={rgb, 255:red, 0; green, 0; blue, 0 }  ][line width=0.75] [line join = round][line cap = round]   (362.8,93.65) .. controls (391.19,120.35) and (407.85,171.68) .. (414.4,233.46) ;

\draw (167,105.4) node [anchor=north west][inner sep=0.75pt]    {$A$};
\draw (447.46,226.25) node [anchor=north west][inner sep=0.75pt]    {$\overline{B}$};
\draw (161.88,231.73) node [anchor=north west][inner sep=0.75pt]    {$K_{n}$};
\draw (443.44,97.78) node [anchor=north west][inner sep=0.75pt]    {$I_{n}$};
\draw (110.5,83.07) node [anchor=north west][inner sep=0.75pt]    {$H$};

\end{tikzpicture}

}}
    \caption{Graph $H$ constructed for Theorem~\ref{theo:perfHardnessC5}.}
    \label{fig:construcao2}
\end{figure}

\fi

Despite the hardness results presented in Theorem~\ref{theo:perfHardness}, 
next we show that $\textsc{Comp-Sub}(\mathcal{PM})$ can be solved in polynomial time on $hole$-free graphs. Recall that a $hole$ is a cycle on $5$ or more vertices.

\begin{theorem}\label{theo:holefree}
 $\textsc{Comp-Sub}(\mathcal{PM})$ is polynomial-time solvable on $hole$-free graphs.
\end{theorem}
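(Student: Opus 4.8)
The plan is to turn hole-freeness into rigidity of the perfect matching cut. Suppose $(G_1,G_2)$ is a complementary decomposition of $G$ with perfect matching cut $M$, and let $\mu$ denote the bijection pairing each vertex with its corresponding vertex across $M$. I would first isolate a \emph{hole-closing} observation: if $P$ is an induced path of $G_1$ with endvertices $u,v$ and $Q$ is an induced path of $G_2$ with endvertices $\mu(u),\mu(v)$ such that no internal vertex of $P$ has its corresponding vertex in $Q$ (and vice versa), then $V(P)\cup V(Q)$ together with the matching edges $u\mu(u)$ and $v\mu(v)$ induces a cycle of length $|V(P)|+|V(Q)|$ in $G$. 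Applying this with $P$ a shortest induced $u$-$v$ path of $G_1$ (which has at least three vertices when $u,v$ are non-adjacent but in a common component of $G_1$) and $Q$ the single edge $\mu(u)\mu(v)$ gives, since $G$ is hole-free: \emph{non-adjacent vertices in the same component of $G_1$ have non-adjacent corresponding vertices in $G_2$}, and symmetrically with $G_1$ and $G_2$ interchanged. A first consequence is that a \emph{yes}-instance is connected (one of $G_1,G_2$ is connected and is matched into the other), so I may assume $G$ connected. I would then distinguish whether both parts of a hypothetical decomposition are connected.

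If $G_1$ and $G_2$ are both connected, the observation applied in both directions forces $\mu$ to preserve adjacency and non-adjacency, so $\mu$ is an isomorphism $G_1\to G_2$; hence $G$ is the prism over $G_1$ (two disjoint copies of $G_1$ plus a perfect matching joining the two copies of each vertex), and from $G_2\simeq\overline{G_1}$ the graph $G_1$ is self-complementary. Moreover, if some block $B$ of $G_1$ were not complete, then $B$ would contain non-adjacent vertices $x,y$ on a common shortest induced cycle, hence two internally disjoint induced $x$-$y$ paths; copying one in each copy of $G_1$ and closing with two matching edges produces an induced cycle of length at least $6$, a hole. So $G_1$ is a block graph (all blocks complete). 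It therefore suffices to test, in polynomial time, whether $G$ is the prism over a connected graph $H$ (via Cartesian-product decomposition) and, if so, whether $H\simeq\overline H$; since $H$ (and, in the positive case, $\overline H$) is then a block graph, this isomorphism test is polynomial.

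If exactly one part is connected, say $G_1$ is connected and $G_2$ is disconnected with components $D_1,\dots,D_t$ (so $G_1$ is a join of $\overline{D_1},\dots,\overline{D_t}$), I would set $Z_i=\mu^{-1}(V(D_i))$. The observation used within $D_i$ and within $G_1$ shows that $\mu$ restricted to $Z_i$ is an isomorphism $G[Z_i]\to D_i$; as the only edges of $G$ between $Z_i$ and $V(G_2)$, and between $V(D_i)$ and $V(G_1)$, are matching edges, $G[Z_i\cup V(D_i)]$ is exactly the prism over $D_i$, an induced subgraph of the hole-free $G$, hence hole-free; by the block argument above, every $D_i$ is a block graph. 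This suggests a polynomial algorithm in the spirit of Lemma~\ref{lemma:H_in_cograph}: enumerate all quadruples $x_1,x_2,y_1,y_2\in V(G)$ with $x_1x_2\in E(G)$, put $V_1=N_G[\{x_1,x_2\}]\setminus\{y_1,y_2\}$, $V_2=V(G)\setminus V_1$, and accept if $[V_1,V_2]$ is a perfect matching and $G[V_1]\simeq\overline{G[V_2]}$. The last test amounts to matching the components of $\overline{G[V_1]}$ against those of $G[V_2]$; in a genuine decomposition all these components are block graphs, so the matching can be decided in polynomial time (reject the quadruple if some component fails to be a block graph). For the right quadruple — $x_1,x_2$ in distinct parts of the join $G_1$ (possible since $t\ge2$) and $y_i=\mu(x_i)$ — one has $N_{G_1}[\{x_1,x_2\}]=V(G_1)$ while the only vertices of $V(G_2)$ in $N_G[\{x_1,x_2\}]$ are $\mu(x_1),\mu(x_2)$, so $V_1$ is recovered exactly; with $O(n^4)$ quadruples the procedure is polynomial.

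The hard part is the structural core: proving that hole-freeness of $G$ forces the ``non-trivial'' side of any complementary decomposition to be assembled from block graphs — equivalently, that $M$ realises the isomorphism between the two parts, componentwise when one part is disconnected — so that a prism over a block graph always occurs as an induced subgraph. Everything else (the case split, the $O(n^4)$ enumeration, Cartesian-product recognition, isomorphism of block graphs) is routine; this is also what keeps the problem polynomial even though graph isomorphism is $\GI$-hard on hole-free graphs in general, since every isomorphism test the algorithm performs is between block graphs.
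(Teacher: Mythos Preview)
Your approach is essentially correct and genuinely different from the paper's. One small repair is needed: you justify the block argument by saying that two non-adjacent vertices $x,y$ of a biconnected block lie ``on a common shortest induced cycle''. That is false in general (take $V=\{1,\dots,5\}$ with edges $12,13,14,15,23,34,45$: the graph is $2$-connected, $2$ and $5$ are non-adjacent, but every induced cycle is a triangle). What you actually need---and what does hold---is the weaker statement that $x$ and $y$ are joined by two internally disjoint \emph{induced} paths: start from Menger's two internally disjoint paths and replace each by a shortest $x$--$y$ path inside its own vertex set; the resulting paths are induced in $G$ and still internally disjoint. With this fix your prism argument goes through, and the rest (componentwise isomorphism via $\mu$, block-graph components, the $O(n^4)$ enumeration plus a Cartesian-product check, and polynomial isomorphism of block graphs) is sound.

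For comparison, the paper takes a different structural route. Its key lemma is that if a subset $F\subseteq V(G_1)$ induces a biconnected subgraph, then $G[F^{G_2}]$ is a cluster graph (a $P_3$ there, together with one of the two independent $u_1$--$u_3$ paths avoiding $u_2$, closes to a hole). When $G_1$ is biconnected this makes $G_2$ a cluster and both sides cographs, so Lemma~\ref{lemma:H_in_cograph} applies. When $G_1$ is not biconnected, a block--cutpoint analysis shows that one of $G_1,G_2$ becomes biconnected after deleting at most two vertices, so the other side has distance at most $2$ to a cluster; the paper then enumerates $O(n^8)$ eight-tuples and uses a tailor-made isomorphism test for graphs at distance $\le 2$ from a cluster. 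Your route trades the biconnectivity/cluster dichotomy for the single ``$\mu$ is a componentwise isomorphism'' observation; it yields the weaker conclusion that the components of the disconnected side are block graphs (the paper in fact gets ``cluster up to two vertices''), but this is still enough for a polynomial algorithm and leads to a cleaner $O(n^4)$ enumeration with isomorphism delegated to the known polynomial test for block graphs.
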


\begin{proof}
Let $G$ be a $hole$-free graph having $2n$-vertices. 
We assume that $n$ is at least $5$; otherwise, the problem can be solved in $O(1)$ time.

Suppose that $G \in \textsc{Comp-Sub}(\mathcal{PM})$, then $G$ is decomposable into complementary subgraphs $G_1$ and $G_2$, such that the edge cut $M$ of the decomposition is a perfect matching.

Recall that $G_1$ or $G_2$ is a connected graph. Thus, we assume that $G_1$ is connected.

We go through the proof by analysing the structure of the graphs of the decomposition by means of their connectivity (Claims~1 and~2), and we conclude by showing how to find that decomposition when it exists.

\medskip
\noindent{\bf Claim 1.}\label{claim:2-conexo}
{\it
Let $G_1$ be a connected graph with at least five vertices and $F \subseteq V(G_1)$. If $G[F]$ is biconnected, then $G[F^{G_2}]$ is a cluster graph.
}
\begin{proof} \textit{(of Claim~1.)}
Suppose, by contradiction, that $G[F^{G_2}]$ is not a cluster graph and let $v_1 v_2 v_3$ be a $P_3$ in $G[F^{G_2}]$. Since $G[F]$ is $2$-connected, there exist two independent paths between any two vertices in $F$. 
Consider $u_1,u_2,u_3 \in F$ as the corresponding vertices of $v_1, v_2, v_3$, respectively.
Let $P$ and $P'$ be two independent paths between $u_1$ and $u_3$ in $F$. Since $P$ and $P'$ are independent, $u_2$ does not belong to $P$ or $P'$, say $P$. Then $P \cup \{v_1,v_2,v_3\}
$ induces a $hole$ in $G$, a contradiction. \qed \end{proof}

Next, we see more on the structure of $G_1$ and $G_2$.

\medskip
\noindent{\bf Claim 2.}\label{claim:1-conexo}
{\it
Let $G_1$ be a connected graph having at least five vertices.
If $G_1$ is non-biconnected, then 
either there is $S \subset V(G_1)$ with $|S|\leq 2$ such that $G_1  \setminus S$ is biconnected; 
or there is $S' \subset V(G_2)$ with $|S'|\leq 2$ such that $G_2 \setminus S'$ is biconnected.
}

\begin{proof} \textit{(of Claim~2.)}
Suppose that $G_1$ is non-biconnected and let $T$ be a block-cut-point tree of $G_1$. Let $\mathscr{B}= \{B_1, \dots, B_s\}$ and $\mathscr{C}= \{c_1, \dots, c_t\}$ be the sets of  blocks and cutpoints in $G_1$, respectively. 
The proof is divided in two cases: 
(I) $|\mathscr{B}| \geq 2$, $|\mathscr{C}| = 1$; and (II) $|\mathscr{B}| \geq 2$, $|\mathscr{C}| \geq 2$.

Recall that if $|\mathscr{B}| = 1$, then $|\mathscr{C}| = 0$ and $G_1$ is biconnected.

\smallskip
(I) Suppose that $|\mathscr{B}| \geq 2$ and $|\mathscr{C}| = 1$. 
Let $\mathscr{C}= \{c\}$. 
We have that 
$G_1 \setminus \{c\}$ is the disjoint union  $(B_1\setminus \{c\}) \cup \dots \cup (B_s \setminus \{c\})$. 
This implies that 
$\overline{G}_1 \setminus \{\overline{c}\}$ is the join  $(\overline{B}_1 \setminus \{\overline{c}\}) + \dots + (\overline{B}_s \setminus \{\overline{c}\})$. 

\begin{itemize}
    \item If $s\geq 3$ then $G_2 \setminus \{\overline{c}\} = (\overline{B}_1 \setminus \{\overline{c}\}) + \dots + (\overline{B}_s \setminus \{\overline{c}\})$ is biconnected.
    
    \item If $s=2$, $|\overline{B}_1 \setminus \{\overline{c}\}|\geq 2$, and $|\overline{B}_2 \setminus \{\overline{c}\}| \geq 2$ then $G_2 \setminus \{\overline{c}\} = (\overline{B}_1 \setminus \{\overline{c}\}) + (\overline{B}_2 \setminus \{\overline{c}\})$ is also biconnected.
    
    \item If $s=2$ and $|\overline{B}_1 \setminus \{\overline{c}\}|=1$ then $|\overline{B}_2 \setminus \{\overline{c}\}|\geq 2$. Otherwise, $G_1$ (and $G_2$) has only three vertices. Thus, ${B}_2$ is a block of $G_1$ with size $|V(G_1)|-1$, and $S={B}_1 \setminus \{{c}\}$ is as required. 
\end{itemize}



(II) Now, consider that  $|\mathscr{B}| \geq 2$ and $|\mathscr{C}| \geq 2$. Let $B,B' \in \mathscr{B}$ two distinct leaves in $T$ and $c,c' \in \mathscr{C}$ be two distinct cutpoints such that $Bc, B'c' \in E(T)$.

Let $D = V(G_1) \setminus (B \cup B')$.
Since $B$ (resp. $B'$) is a leaf in $T$, we have that $V(B) \setminus \{c\}$ (resp. $V(B') \setminus \{c'\}$) is not adjacent to $B' \cup D$ (resp. $B \cup D$). This implies that $\overline{G}_1 \setminus \{\overline{c},\overline{c}'\}$ is the join $(\overline{B} \setminus \{\overline{c}\}) + (\overline{B}' \setminus \{\overline{c}'\}) + \overline{D}$.

\begin{itemize}
    \item If $D \neq \emptyset$, we have that $(\overline{B} \setminus \{\overline{c}\}) + (\overline{B}' \setminus \{\overline{c}'\}) + \overline{D}$ is biconnected. Thus, ${G}_2 \setminus \{\overline{c},\overline{c}'\}$ is biconnected as required.
    

\item If $D = \emptyset$, $|B \setminus \{c\}| \geq 2$, and  $|B' \setminus \{c'\}| \geq 2$, then ${G}_2 \setminus \{\overline{c},\overline{c}'\} = (\overline{B} \setminus \{\overline{c}\}) + (\overline{B}' \setminus \{\overline{c}'\})$ is also biconnected.

\item If $D = \emptyset$ and $|B \setminus \{c\}| = 1$, then $|B' \setminus \{c'\}| \geq 2$. Otherwise, $G_1$ (and $G_2$) has only four vertices. Thus, $G_1 \setminus B$ is biconnected (notice that $|B| = 2$). 
\end{itemize}

This completes the proof of Claim~2.
\qed \end{proof}

By Claim~1, if $G_1$ is biconnected, then $G_2$ is a cluster graph. Since $G_1 \simeq \overline{G}_2$, we have that $G_1$ is a complete multipartite graph. Hence, $G_1$ and $G_2$ are cographs and, by Lemma~\ref{lemma:H_in_cograph}, we can find the complementary partition of $G$ in polynomial time.

Now, if $G_1$ is non-biconnected, recall that by Claim~2, 
either there is $S \subset V(G_1)$ with $|S|\leq 2$ such that $G_1  \setminus S$ is biconnected; 
or there is $S' \subset V(G_2)$ with $|S'|\leq 2$ such that $G_2 \setminus S'$ is biconnected.

Thus, there is a fixed number of vertices (at most $2$) such that removing from $G_1$ or $G_2$ leaves a biconnected graph. We deal with the case that there exist $c,c' \in V(G_2)$ such that $G_2 \setminus \{c,c'\}$ is biconnected. The approach for the other case is similar.

If there exist $c,c' \in V(G_2)$ such that $G_2 \setminus \{c,c'\}$ is $2$-connected, by Claim~1 (dual), we have that the graph induced by $(V(G_2) \setminus \{c,c'\})^{G_1}$ is a cluster graph. Then $G_1$ and $\overline{G}_2$ have distance to cluster equals $2$.  We proceed by Algorithm~2.

\begin{center}
\scalebox{0.95}{
\begin{minipage}{1.0\textwidth}
\begin{algorithm2e}[H]
\label{alg:dist_cluster}
 \algsetup{linenosize=\small}
 \normalsize
 \DontPrintSemicolon
 \LinesNumbered
 \BlankLine
 \KwIn{A graph $G$.}
 \KwOut{Whether $G$ is partitionable into two complementary graphs $G_1$ and $G_2$ such that $G_1$ and $\overline{G}_2$ have distance to cluster equals $2$ and the edge cut of the decomposition is a perfect matching.}
 \BlankLine

    \ForAll{$x_1, \dots, x_4, y_1, \dots, y_4 \in V(G)$}{
        $V(G_2) := N_G[\{y_1, \dots, y_4\}] \setminus \{x_1, \dots, x_4\}$\;
        $V(G_1) := V(G) \setminus V(G_2)$\;
        $M := \{ xy \in E(G) : x \in V(G_1), y \in V(G_2)\}$\;
        \If{$M$ is a perfect matching}{
                \ForAll{cluster-modulator $S_1$ of $G_1$, such that $|S_1| \leq 2$}{
                    \ForAll{cluster-modulator $S_2$ of $\overline{G}_2$, such that $|S_2| = |S_1|$}{
                        \ForAll{mapping $f:S_1\mapsto S_2$}{
                            \If{$f$ can be extended to an isomorphism from $G_1$ to $\overline{G}_2$}{
                                \Return yes\;
                            }
                        }
                     }
                }
        }
    }
    \Return no\;
 
\caption{\textsc{Partition-Into-Complementary-Subgraphs}($G$)}
\end{algorithm2e}
\end{minipage}
}
\end{center}

Since $G_2$ has distance to complete multipartite equals $2$, there exist four vertices $y_1, \dots, y_4 \in V(G_2)$ such that $N_{G_2}[\{y_1, \dots, y_4\}] = V(G_2)$. Then, if a complementary decomposition $(G_1,G_2)$ exists, we have that $|N_{G}[\{y_1, \dots, y_4\}]| = n + 4$. Thence, it is possible to find $V(G_2)$ which is $N_{G}[\{y_1, \dots, y_4\}]$ except for four vertices $x_1, \dots, x_4 \in N_{G}[\{y_1, \dots, y_4\}]$. We put $x_1, \dots, x_4$ in $V(G_1)$ as well as the remaining vertices $\{ v \in V(G) : v \notin N_{G}[\{y_1, \dots, y_4\}] \}$.  Given $V(G_1), V(G_2)$, and $M$, we check whether $M$ is a perfect matching. If so,  we compute $\overline{G}_2$ and we proceed to the step of finding  cluster-modulators $S_1$ for $G_1$ and $S_2$ for $\overline{G}_2$, that are done by Lines~6--7. In a naive manner, all the possible pair of modulators can be found in $O(n^4)$, but we show how to find them in a more efficient way.

We first find a $P_3 = w_1w_2w_3$ in $G_1$. We know that at least one vertex in $\{w_1,w_2,w_3\}$ must be included in a cluster-modulator for $G_1$. Then, for every $w \in \{w_1,w_2,w_3\}$ we put $w \in S_1$ and we branch by searching (if any) for a $P_3 = w'_1w'_2w'_3$ in $G_1 \setminus \{w\}$. Again, given that at least one vertex in $\{w'_1,w'_2,w'_3\}$ must be included in a cluster-modulator for $G_1$, for every $w' \in \{w_1,w_2,w_3\}$ we put $w' \in S_1$. If $G_1 \setminus S_1$ is a cluster graph, we proceed to finding, in the same manner, a cluster-modulator $S_2$ for $\overline{G}_2$. Note that this is basically a bounded search tree algorithm for finding cluster vertex deletion sets.

Given a pair of modulators $S_1$ and $S_2$ such that $|S_1|=|S_2|$, and a mapping from $S_1$ to $S_2$, the final task is checking if such a mapping can be extended to an isomorphism between $G_1$ and $\overline{G}_2$. Note that, by the bounded search tree technique, the number of pairs of modulators and mappings that must be considered is bounded by a constant.

Recall that $G_1$ (resp. $\overline{G}_2$) is a disjoint union of complete graphs $H_1 \cup \dots \cup H_p$ (resp. $H'_1 \cup \dots \cup  H'_p$), for some $p \geq 2$, with the addition of two vertices $w, w'$ (resp. $z,z'$) arbitrarily adjacent to $H_1 \cup \dots \cup  H_p$ (resp. $H'_1 \cup \dots \cup  H'_p$).
With this structure, an isomorphism from $G_1$ to $\overline{G}_2$ can be determined as follows. 

For a mapping $w \mapsto z$, $w' \mapsto z'$, we can map $H_i$ to $H'_j$, $i,j \in [p]$, if and only if 

\vspace{-0.2cm}
\begin{itemize}
\item[] $|V(H_i)| = |V(H'_j)|$ and 
\item[] $|N_{H_i}(w) \setminus N_{H_i}(w')| = |N_{H'_j}(z) \setminus N_{H'_j}(z')|$ and 
\item[] $|N_{H_i}(w') \setminus N_{H_i}(w)| = |N_{H'_j}(z') \setminus N_{H'_j}(z)|$ and 
\item[] $|N_{H_i}(w) \cap N_{H_j}(w')| = |N_{H'_i}(z) \cap N_{H'_j}(z')|$.
\end{itemize}

Therefore, each mapping $w \mapsto z$, $w' \mapsto z'$ defines ``types'' of cliques, from which the mapping can be extended to an isomorphism from $G_1$ to $\overline{G}_2$ if and only if $G_1$ and $\overline{G}_2$ have the same number of cliques per type.

Next, we analyse the running time of Algorithm~2.

First, in Line~1, we check every $8$-tuple of vertices in $V(G)$ to separate those $x_1, \dots, x_4 \in V(G_1)$ and $y_1, \dots, y_4 \in V(G_2)$, which requires $O(n^8)$ time. Lines~2--4 define $V(G_1), V(G_2)$, and $M$, which run in $O(n+m)$ time. Checking whether $M$ is a perfect matching (Line~5) can be done in $O(n+m)$ time. 

Recall that a $P_3$ in $G$ can be found in $O(n+m)$ time.
By the method previously described, Line~6 can be done by finding a $P_3 = w_1w_2w_3$ in $G_1$; for every $w \in \{w_1,w_2,w_3\}$ finding a $P_3 = w'_1w'_2w'_3$ in $G_1 \setminus \{w\}$ in $G_1$; and finally, for every $w' \in \{w_1,w_2,w_3\}$, checking whether $G_1 \setminus S_1 = \{w, w'\}$ is a cluster graph. This produces a ternary search tree with height equals $2$. Hence with $9$ leaf nodes, that are at most $9$ possible cluster-modulators $\{w,w'\}$ for $G_1$. This requires a running time of $O(m+n)$. For every of those possible cluster-modulators for $G_1$ we proceed to finding every cluster-modulator for $\overline{G}_2$ (Line~7) by the same method.
This gives an amount of at most $81$ possible $4$-tuples $w,w',z,z'$ that must be checked, hence Lines~6--8 run in $O(n+m)$ time.

Finally, for Line~9, checking whether an isomorphism from $G_1$ to  $\overline{G}_2$ can be extended from $f$ can be done by checking sizes of cliques and neighborhoods, which can be done in $O(n+m)$ time.

Therefore, the overall running time of Algorithm~2 is of order $O(n^8(n+m)) = O(n^9+n^8m)$. 
\qed \end{proof}

Next, it follows a characterization for $\textsc{Comp-Sub}(\mathcal{PM})$ in the class of distance hereditary graphs, which is a subclass of $hole$-free graphs. A \textit{distance-hereditary} graph is a $\{domino, house, gem, hole\}$-free graph. See a $domino$, a $house$ and a $gem$ in Figure~\ref{fig:small_subgraphs}. 
For the next result, let \Large{$\varrho$} \normalsize be the graph in Figure~\ref{fig:small_subgraphs}.

\begin{figure}[htb]
\centering
{\setlength{\fboxsep}{3.5pt}
\setlength{\fboxrule}{0.3pt}
\fbox{

\tikzset{every picture/.style={line width=0.5pt}} 

\begin{tikzpicture}[x=0.35pt,y=0.35pt,yscale=-1,xscale=1]

\draw  [color={rgb, 255:red, 0; green, 0; blue, 0 }  ,draw opacity=1 ][fill={rgb, 255:red, 0; green, 0; blue, 0 }  ,fill opacity=1 ] (109.33,118.8) .. controls (109.33,116.04) and (111.57,113.8) .. (114.33,113.8) .. controls (117.09,113.8) and (119.33,116.04) .. (119.33,118.8) .. controls (119.33,121.56) and (117.09,123.8) .. (114.33,123.8) .. controls (111.57,123.8) and (109.33,121.56) .. (109.33,118.8) -- cycle ;
\draw  [color={rgb, 255:red, 0; green, 0; blue, 0 }  ,draw opacity=1 ][fill={rgb, 255:red, 0; green, 0; blue, 0 }  ,fill opacity=1 ] (65.23,161.83) .. controls (65.23,159.07) and (67.47,156.83) .. (70.23,156.83) .. controls (72.99,156.83) and (75.23,159.07) .. (75.23,161.83) .. controls (75.23,164.59) and (72.99,166.83) .. (70.23,166.83) .. controls (67.47,166.83) and (65.23,164.59) .. (65.23,161.83) -- cycle ;
\draw    (70.23,118.8) -- (114.33,118.8) ;
\draw    (114.33,113.8) -- (114.33,161.83) ;
\draw    (114.33,161.83) -- (70.23,161.83) ;
\draw  [color={rgb, 255:red, 0; green, 0; blue, 0 }  ,draw opacity=1 ][fill={rgb, 255:red, 0; green, 0; blue, 0 }  ,fill opacity=1 ] (109.33,161.83) .. controls (109.33,159.07) and (111.57,156.83) .. (114.33,156.83) .. controls (117.09,156.83) and (119.33,159.07) .. (119.33,161.83) .. controls (119.33,164.59) and (117.09,166.83) .. (114.33,166.83) .. controls (111.57,166.83) and (109.33,164.59) .. (109.33,161.83) -- cycle ;
\draw  [color={rgb, 255:red, 0; green, 0; blue, 0 }  ,draw opacity=1 ][fill={rgb, 255:red, 0; green, 0; blue, 0 }  ,fill opacity=1 ] (65.23,118.8) .. controls (65.23,116.04) and (67.47,113.8) .. (70.23,113.8) .. controls (72.99,113.8) and (75.23,116.04) .. (75.23,118.8) .. controls (75.23,121.56) and (72.99,123.8) .. (70.23,123.8) .. controls (67.47,123.8) and (65.23,121.56) .. (65.23,118.8) -- cycle ;
\draw    (70.23,113.8) -- (70.23,161.83) ;
\draw  [color={rgb, 255:red, 0; green, 0; blue, 0 }  ,draw opacity=1 ][fill={rgb, 255:red, 0; green, 0; blue, 0 }  ,fill opacity=1 ] (222.33,120.13) .. controls (222.33,117.37) and (224.57,115.13) .. (227.33,115.13) .. controls (230.09,115.13) and (232.33,117.37) .. (232.33,120.13) .. controls (232.33,122.89) and (230.09,125.13) .. (227.33,125.13) .. controls (224.57,125.13) and (222.33,122.89) .. (222.33,120.13) -- cycle ;
\draw  [color={rgb, 255:red, 0; green, 0; blue, 0 }  ,draw opacity=1 ][fill={rgb, 255:red, 0; green, 0; blue, 0 }  ,fill opacity=1 ] (178.23,161.17) .. controls (178.23,158.41) and (180.47,156.17) .. (183.23,156.17) .. controls (185.99,156.17) and (188.23,158.41) .. (188.23,161.17) .. controls (188.23,163.93) and (185.99,166.17) .. (183.23,166.17) .. controls (180.47,166.17) and (178.23,163.93) .. (178.23,161.17) -- cycle ;
\draw    (183.23,120.13) -- (227.33,120.13) ;
\draw    (227.33,115.13) -- (227.33,161.17) ;
\draw    (227.33,161.17) -- (183.23,161.17) ;
\draw  [color={rgb, 255:red, 0; green, 0; blue, 0 }  ,draw opacity=1 ][fill={rgb, 255:red, 0; green, 0; blue, 0 }  ,fill opacity=1 ] (222.33,161.17) .. controls (222.33,158.41) and (224.57,156.17) .. (227.33,156.17) .. controls (230.09,156.17) and (232.33,158.41) .. (232.33,161.17) .. controls (232.33,163.93) and (230.09,166.17) .. (227.33,166.17) .. controls (224.57,166.17) and (222.33,163.93) .. (222.33,161.17) -- cycle ;
\draw  [color={rgb, 255:red, 0; green, 0; blue, 0 }  ,draw opacity=1 ][fill={rgb, 255:red, 0; green, 0; blue, 0 }  ,fill opacity=1 ] (178.23,120.13) .. controls (178.23,117.37) and (180.47,115.13) .. (183.23,115.13) .. controls (185.99,115.13) and (188.23,117.37) .. (188.23,120.13) .. controls (188.23,122.89) and (185.99,125.13) .. (183.23,125.13) .. controls (180.47,125.13) and (178.23,122.89) .. (178.23,120.13) -- cycle ;
\draw    (183.23,115.13) -- (183.23,166.17) ;
\draw  [color={rgb, 255:red, 0; green, 0; blue, 0 }  ,draw opacity=1 ][fill={rgb, 255:red, 0; green, 0; blue, 0 }  ,fill opacity=1 ] (222.33,79.13) .. controls (222.33,76.37) and (224.57,74.13) .. (227.33,74.13) .. controls (230.09,74.13) and (232.33,76.37) .. (232.33,79.13) .. controls (232.33,81.89) and (230.09,84.13) .. (227.33,84.13) .. controls (224.57,84.13) and (222.33,81.89) .. (222.33,79.13) -- cycle ;
\draw    (183.23,79.13) -- (227.33,79.13) ;
\draw    (227.33,79.13) -- (227.33,125.13) ;
\draw  [color={rgb, 255:red, 0; green, 0; blue, 0 }  ,draw opacity=1 ][fill={rgb, 255:red, 0; green, 0; blue, 0 }  ,fill opacity=1 ] (178.23,79.13) .. controls (178.23,76.37) and (180.47,74.13) .. (183.23,74.13) .. controls (185.99,74.13) and (188.23,76.37) .. (188.23,79.13) .. controls (188.23,81.89) and (185.99,84.13) .. (183.23,84.13) .. controls (180.47,84.13) and (178.23,81.89) .. (178.23,79.13) -- cycle ;
\draw    (183.23,79.13) -- (183.23,125.13) ;
\draw  [color={rgb, 255:red, 0; green, 0; blue, 0 }  ,draw opacity=1 ][fill={rgb, 255:red, 0; green, 0; blue, 0 }  ,fill opacity=1 ] (86.57,79.13) .. controls (86.57,76.37) and (88.81,74.13) .. (91.57,74.13) .. controls (94.33,74.13) and (96.57,76.37) .. (96.57,79.13) .. controls (96.57,81.89) and (94.33,84.13) .. (91.57,84.13) .. controls (88.81,84.13) and (86.57,81.89) .. (86.57,79.13) -- cycle ;
\draw    (91.57,79.13) -- (70.23,121.8) ;
\draw    (91.57,79.13) -- (114.33,121.8) ;
\draw  [color={rgb, 255:red, 0; green, 0; blue, 0 }  ,draw opacity=1 ][fill={rgb, 255:red, 0; green, 0; blue, 0 }  ,fill opacity=1 ] (332.67,94.8) .. controls (332.67,92.04) and (334.91,89.8) .. (337.67,89.8) .. controls (340.43,89.8) and (342.67,92.04) .. (342.67,94.8) .. controls (342.67,97.56) and (340.43,99.8) .. (337.67,99.8) .. controls (334.91,99.8) and (332.67,97.56) .. (332.67,94.8) -- cycle ;
\draw  [color={rgb, 255:red, 0; green, 0; blue, 0 }  ,draw opacity=1 ][fill={rgb, 255:red, 0; green, 0; blue, 0 }  ,fill opacity=1 ] (271.23,131.5) .. controls (271.23,128.74) and (273.47,126.5) .. (276.23,126.5) .. controls (278.99,126.5) and (281.23,128.74) .. (281.23,131.5) .. controls (281.23,134.26) and (278.99,136.5) .. (276.23,136.5) .. controls (273.47,136.5) and (271.23,134.26) .. (271.23,131.5) -- cycle ;
\draw    (293.57,94.8) -- (337.67,94.8) ;
\draw    (337.67,94.8) -- (357.33,131.5) ;
\draw  [color={rgb, 255:red, 0; green, 0; blue, 0 }  ,draw opacity=1 ][fill={rgb, 255:red, 0; green, 0; blue, 0 }  ,fill opacity=1 ] (352.33,131.5) .. controls (352.33,128.74) and (354.57,126.5) .. (357.33,126.5) .. controls (360.09,126.5) and (362.33,128.74) .. (362.33,131.5) .. controls (362.33,134.26) and (360.09,136.5) .. (357.33,136.5) .. controls (354.57,136.5) and (352.33,134.26) .. (352.33,131.5) -- cycle ;
\draw  [color={rgb, 255:red, 0; green, 0; blue, 0 }  ,draw opacity=1 ][fill={rgb, 255:red, 0; green, 0; blue, 0 }  ,fill opacity=1 ] (288.57,94.8) .. controls (288.57,92.04) and (290.81,89.8) .. (293.57,89.8) .. controls (296.33,89.8) and (298.57,92.04) .. (298.57,94.8) .. controls (298.57,97.56) and (296.33,99.8) .. (293.57,99.8) .. controls (290.81,99.8) and (288.57,97.56) .. (288.57,94.8) -- cycle ;
\draw    (293.57,94.8) -- (276.23,131.5) ;
\draw  [color={rgb, 255:red, 0; green, 0; blue, 0 }  ,draw opacity=1 ][fill={rgb, 255:red, 0; green, 0; blue, 0 }  ,fill opacity=1 ] (312.67,163.83) .. controls (312.67,161.07) and (314.91,158.83) .. (317.67,158.83) .. controls (320.43,158.83) and (322.67,161.07) .. (322.67,163.83) .. controls (322.67,166.59) and (320.43,168.83) .. (317.67,168.83) .. controls (314.91,168.83) and (312.67,166.59) .. (312.67,163.83) -- cycle ;
\draw    (357.33,131.5) -- (317.67,163.83) ;
\draw    (276.23,131.5) -- (317.67,163.83) ;
\draw    (337.67,91.8) -- (317.67,163.83) ;
\draw    (293.57,91.8) -- (317.67,163.83) ;
\draw  [color={rgb, 255:red, 0; green, 0; blue, 0 }  ,draw opacity=1 ][fill={rgb, 255:red, 0; green, 0; blue, 0 }  ,fill opacity=1 ] (449.67,119.83) .. controls (449.67,117.07) and (451.91,114.83) .. (454.67,114.83) .. controls (457.43,114.83) and (459.67,117.07) .. (459.67,119.83) .. controls (459.67,122.59) and (457.43,124.83) .. (454.67,124.83) .. controls (451.91,124.83) and (449.67,122.59) .. (449.67,119.83) -- cycle ;
\draw  [color={rgb, 255:red, 0; green, 0; blue, 0 }  ,draw opacity=1 ][fill={rgb, 255:red, 0; green, 0; blue, 0 }  ,fill opacity=1 ] (405.57,160.87) .. controls (405.57,158.11) and (407.81,155.87) .. (410.57,155.87) .. controls (413.33,155.87) and (415.57,158.11) .. (415.57,160.87) .. controls (415.57,163.63) and (413.33,165.87) .. (410.57,165.87) .. controls (407.81,165.87) and (405.57,163.63) .. (405.57,160.87) -- cycle ;
\draw    (410.57,119.83) -- (454.67,119.83) ;
\draw    (454.67,160.87) -- (410.57,160.87) ;
\draw  [color={rgb, 255:red, 0; green, 0; blue, 0 }  ,draw opacity=1 ][fill={rgb, 255:red, 0; green, 0; blue, 0 }  ,fill opacity=1 ] (449.67,160.87) .. controls (449.67,158.11) and (451.91,155.87) .. (454.67,155.87) .. controls (457.43,155.87) and (459.67,158.11) .. (459.67,160.87) .. controls (459.67,163.63) and (457.43,165.87) .. (454.67,165.87) .. controls (451.91,165.87) and (449.67,163.63) .. (449.67,160.87) -- cycle ;
\draw  [color={rgb, 255:red, 0; green, 0; blue, 0 }  ,draw opacity=1 ][fill={rgb, 255:red, 0; green, 0; blue, 0 }  ,fill opacity=1 ] (405.57,119.83) .. controls (405.57,117.07) and (407.81,114.83) .. (410.57,114.83) .. controls (413.33,114.83) and (415.57,117.07) .. (415.57,119.83) .. controls (415.57,122.59) and (413.33,124.83) .. (410.57,124.83) .. controls (407.81,124.83) and (405.57,122.59) .. (405.57,119.83) -- cycle ;
\draw    (410.57,114.83) -- (410.57,165.87) ;
\draw  [color={rgb, 255:red, 0; green, 0; blue, 0 }  ,draw opacity=1 ][fill={rgb, 255:red, 0; green, 0; blue, 0 }  ,fill opacity=1 ] (449.67,78.83) .. controls (449.67,76.07) and (451.91,73.83) .. (454.67,73.83) .. controls (457.43,73.83) and (459.67,76.07) .. (459.67,78.83) .. controls (459.67,81.59) and (457.43,83.83) .. (454.67,83.83) .. controls (451.91,83.83) and (449.67,81.59) .. (449.67,78.83) -- cycle ;
\draw    (410.57,78.83) -- (454.67,78.83) ;
\draw    (454.67,78.83) -- (454.67,124.83) ;
\draw  [color={rgb, 255:red, 0; green, 0; blue, 0 }  ,draw opacity=1 ][fill={rgb, 255:red, 0; green, 0; blue, 0 }  ,fill opacity=1 ] (405.57,78.83) .. controls (405.57,76.07) and (407.81,73.83) .. (410.57,73.83) .. controls (413.33,73.83) and (415.57,76.07) .. (415.57,78.83) .. controls (415.57,81.59) and (413.33,83.83) .. (410.57,83.83) .. controls (407.81,83.83) and (405.57,81.59) .. (405.57,78.83) -- cycle ;
\draw    (410.57,78.83) -- (410.57,124.83) ;

\draw (61.83,189.7) node [anchor=north west][inner sep=0.75pt]    {$house$};
\draw (158.17,189.37) node [anchor=north west][inner sep=0.75pt]    {$domino$};
\draw (290.33,189.2) node [anchor=north west][inner sep=0.75pt]    {$gem$};
\draw (418.33,180.1) node [anchor=north west][inner sep=0.75pt]  [font=\large]  {$\varrho $};

\end{tikzpicture}

}}
\caption{Some small subgraphs.}
\label{fig:small_subgraphs}
\end{figure}
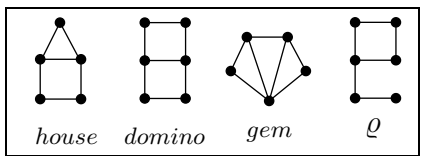

\begin{proposition}\label{prop:distanceHereditaryMPerfect}
Let $G$ be a distance-hereditary graph of order $2n$.
It holds that $G \in \textsc{Comp-Sub}(\mathcal{PM})$ if and only if $G \in \{K_n\overline{K}_n,$\Large{$\varrho$}\normalsize$\}$.
\end{proposition}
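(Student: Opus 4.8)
The graph $K_n\overline{K}_n$ is by definition the complementary prism of $K_n$, so $(K_n,\overline{K}_n)$ is a complementary decomposition of it whose edge cut is a perfect matching; thus $K_n\overline{K}_n\in\textsc{Comp-Sub}(\mathcal{PM})$. It is distance-hereditary because it is $hole$-free (the $n$ degree-$1$ vertices lie on no cycle and the rest is a clique), $gem$-free (every induced $P_4$ has the shape $b_i-a_i-a_j-b_j$ and no vertex is adjacent to both pendants $b_i,b_j$), and $house$- and $domino$-free (it contains no induced $C_4$ at all). For $\varrho$: it has a perfect matching cut — the three edges joining the $C_4$ of it to the remaining path — whose two sides induce $P_3$ and $\overline{P_3}$; equivalently, take $P_3=a-c-b$ and $\overline{P_3}$ consisting of the edge $v_av_b$ together with the isolated vertex $v_c$, and glue them along the matching $c\mapsto v_a$, $a\mapsto v_c$, $b\mapsto v_b$; one checks the result is $\varrho$. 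That $\varrho$ is $\{house,domino,gem,hole\}$-free is a direct inspection of its six vertices. Hence $K_n\overline{K}_n,\varrho\in\textsc{Comp-Sub}(\mathcal{PM})$.

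\textbf{Necessity — set-up and a local lemma.} Suppose $G$ is distance-hereditary and $(G_1,G_2)$ is a complementary decomposition with perfect matching cut $M$ and $|V(G_1)|=|V(G_2)|=n$ (if $n\le 2$ then $G\simeq K_1\overline{K}_1$). Since $G_1\simeq\overline{G_2}$, exactly one of $G_1,G_2$ is connected; assume $G_1$ is connected, so $G_2=C_1\cup\cdots\cup C_k$ with $k\ge 2$ and $G_1=\overline{C_1}+\cdots+\overline{C_k}$. If some $C_j$ contained an induced $P_4$, so would $\overline{C_j}$, and together with any vertex of $\overline{C_{j'}}$ ($j'\ne j$), which is adjacent to all of $\overline{C_j}$, it would form an induced $gem$ in $G_1$ — impossible. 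Hence every $C_j$ (and so $G_1$ and $G_2$) is a cograph. Next, for any three matching edges $u_1v_1,u_2v_2,u_3v_3$ of $M$ (with $u_i\in V(G_1)$), the graph $G[\{u_1,u_2,u_3,v_1,v_2,v_3\}]$ is the result of gluing $G_1[\{u_1,u_2,u_3\}]$ and $G_2[\{v_1,v_2,v_3\}]$ (graphs on at most three vertices) along the matching; enumerating the finitely many possibilities and discarding those containing a $house$, $hole$, $domino$, or $gem$ yields in particular: \emph{(i)} if $G_1[\{u_1,u_2,u_3\}]\simeq K_3$ then $G_2[\{v_1,v_2,v_3\}]\simeq\overline{K}_3$ (e.g.\ $(K_3,\overline{K}_3)$ glues to a triangle with three pendants, which is distance-hereditary, whereas $(K_3,\overline{P_3})$ contains a $house$ and $(K_3,P_3)$, $(K_3,K_3)$ contain an induced $C_5$ and a $house$); and \emph{(ii)} if $G_1[\{u_1,u_2,u_3\}]$ is a $P_3$ with centre $u_2$, then $G_2[\{v_1,v_2,v_3\}]$ has no edge, or has exactly one edge and that edge is incident with $v_2$ (the matching of $(P_3,P_3)$ with centres paired gives a $domino$; every other matching of $(P_3,P_3)$ and all of $(P_3,K_3)$ give an induced $C_5$). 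By symmetry, \emph{(i)} and \emph{(ii)} hold with $G_1$ and $G_2$ interchanged.

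\textbf{Necessity — pinning down $G$.} First, $G_2$ is a disjoint union of cliques. Otherwise some component $C_\ell$ is not complete and, being a connected cograph, contains an induced $P_3=v_a-v_c-v_b$. By \emph{(ii)} (dual form), in $G_1$ the triple $\{M^{-1}(v_a),M^{-1}(v_c),M^{-1}(v_b)\}$ has no edge, or a single edge incident with $M^{-1}(v_c)$; in both cases, since $G_1$ is a join of the $\overline{C_j}$'s, all three of these vertices lie in one part $\overline{C_{\ell'}}$ (two being non-adjacent forces a common part, and the third is non-adjacent to one of those two). Pick any vertex $w$ in another part of $G_1$ (it exists as $k\ge 2$; it is adjacent in $G_1$ to all of $M^{-1}(v_a),M^{-1}(v_c),M^{-1}(v_b)$, and to none of $v_a,v_b,v_c$ in $G$). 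If the triple is edgeless, then $\{M^{-1}(v_a),v_a,v_c,M^{-1}(v_c),w\}$ induces a $C_5$; if it has an edge — say $M^{-1}(v_c)M^{-1}(v_a)$ after renaming — then $\{w,M^{-1}(v_c),M^{-1}(v_a),v_a,v_c\}$ induces a $house$ (here $\{M^{-1}(v_c),v_c,v_a,M^{-1}(v_a)\}$ is an induced $C_4$ and $w$ is adjacent to the adjacent pair $M^{-1}(v_c),M^{-1}(v_a)$). Either way this contradicts distance-heredity. Hence $G_2\simeq K_{n_1}\cup\cdots\cup K_{n_k}$ and $G_1\simeq K_{n_1,\dots,n_k}$ with parts $P_1,\dots,P_k$. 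If some $|P_j|\ge 2$, pick $x,x'\in P_j$ and $z\notin P_j$; then $x-z-x'$ is an induced $P_3$ of $G_1$ with centre $z$, so by \emph{(ii)} the triple $\{M^{-1}(x),M^{-1}(z),M^{-1}(x')\}$ in the disjoint-union-of-cliques $G_2$ has at most one edge, incident with $M^{-1}(z)$; in particular $M^{-1}(x)$ and $M^{-1}(x')$ lie in different cliques of $G_2$. Combining this (over all such $x,x',z$) with \emph{(i)} applied to the triangles of the large cliques of $G_2$ forces $k=2$ and $\{n_1,n_2\}=\{1,2\}$, i.e.\ $G_1\simeq P_3$, $G_2\simeq\overline{P_3}$; of the $3!$ matchings between them exactly the two in which the centre of $P_3$ is matched to a vertex of the $K_2$-component avoid creating an induced $C_5$, and each gives $G\simeq\varrho$. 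If instead every $|P_j|=1$, then $G_1\simeq K_k$, $G_2\simeq\overline{K}_k$, all matchings are equivalent by symmetry, and $G\simeq K_k\overline{K}_k$.

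\textbf{The main obstacle.} The delicate points are the two steps that force structure out of the local lemma. For $G_2$ being a disjoint union of cliques one must be careful to choose the extra vertex $w$ in a part of $G_1$ that is genuinely ``far'' from the $P_3$ under $M$, so that $w$ really has the claimed non-adjacencies; and the harder part is the last step: the purely local ($6$-vertex) information does \emph{not} by itself rule out a part $P_j$ of size $\ge 2$ when $k\ge 3$ or $n$ is large, so one must iterate \emph{(i)} and \emph{(ii)} across several triangles/$P_3$'s (or pass to an $8$-vertex, four–matching-edge configuration) and carefully track into which clique of $G_2$ the map $M$ sends each relevant vertex. Everything then collapses to the prism family $K_n\overline{K}_n$ or, for $n=3$, to $\varrho$.
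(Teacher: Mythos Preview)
The paper states this proposition without proof, so there is no argument to compare against; what follows concerns the internal correctness of your attempt.

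There is a genuine gap at the start of the necessity direction. You assert ``Since $G_1\simeq\overline{G_2}$, exactly one of $G_1,G_2$ is connected'' and then build everything on the resulting join decomposition $G_1=\overline{C_1}+\cdots+\overline{C_k}$ with $k\ge 2$. But the assertion is false in general: a graph and its complement can both be connected (the bull, for instance, is self-complementary, connected, and distance-hereditary). Your derivation that each $C_j$ is a cograph, the argument that $G_2$ is a disjoint union of cliques, and the choice of the auxiliary vertex $w$ ``in another part'' all rely essentially on $k\ge 2$, so the gap is not cosmetic. What is actually needed is a \emph{prior} argument that $G_1$ is $P_4$-free, obtained directly from the forbidden subgraphs of $G$: from an induced $P_4=u_1u_2u_3u_4$ in $G_1$ one gets $v_1v_3,v_1v_4,v_2v_4\notin E(G_2)$ (else an induced $C_5$ or $C_6$) and that $v_1v_2,v_2v_3$ (likewise $v_2v_3,v_3v_4$) are not both present (else $\{u_1,u_2,u_3,v_1,v_2,v_3\}$ induces a domino), and one then has to push this to a contradiction, which for $n\ge 5$ takes extra work you have not supplied. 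Once $G_1$ is known to be a cograph, your ``exactly one connected'' line becomes legitimate.

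A second, smaller gap sits in the final ``pinning down'' step. Your sketch invokes (i) on ``the triangles of the large cliques of $G_2$'' to force $\{n_1,n_2\}=\{1,2\}$, but when $k=2$ and $n_1=n_2=2$ (so $G_1\simeq C_4$, $G_2\simeq 2K_2$) there are no triangles in $G_2$ and (i) is vacuous; yet this case must still be excluded (indeed, any matching satisfying your constraint (A) produces an induced domino on the six vertices $\{x_1,x_2,y_1,y_2,M(x_1),M(y_1)\}$). More generally, the last paragraph candidly says the reduction requires iterating (i)/(ii) over several configurations or passing to $8$-vertex pictures; as written it is a plan rather than a proof.
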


We close this section with a characterization of $\textsc{Comp-Sub}(\mathcal{PM})$ on chordal graphs. Recall that a chordal graph is a $C_{k\geq 4}$-free graph.

\begin{proposition}\label{prop:chordalMPerfect}
Let $G$ be a chordal graph of order $2n$.
Then, $G \in \textsc{Comp-Sub}(\mathcal{PM})$ if and only if $G = K_n\overline{K}_n$.
\end{proposition}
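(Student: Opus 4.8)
The plan is to prove the two directions separately; $(\Leftarrow)$ is routine and $(\Rightarrow)$ carries the content. For $(\Leftarrow)$ I would simply observe that $K_n\overline{K}_n$ consists of a copy of $K_n$ together with a copy of $\overline{K}_n$ joined by a perfect matching on corresponding vertices: every vertex of the $\overline{K}_n$-part then has degree one, so every cycle lies inside the clique $K_n$, which re-confirms chordality, and the decomposition with $H=K_n$, $\overline H=\overline{K}_n$ exhibits $G\in\textsc{Comp-Sub}(\mathcal{PM})$ with matching cut exactly the defining perfect matching.

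For $(\Rightarrow)$, suppose $G$ is chordal with a complementary decomposition $(G_1,G_2)$ (so $G_2\cong\overline{G_1}$) whose edge cut $M$ is a perfect matching; let $\mu\colon V(G_1)\to V(G_2)$ be the bijection given by $M$ and write $\tilde v=\mu(v)$. Since one of $G_1,\overline{G_1}$ is connected and $(G_2,G_1)$ is again a complementary decomposition of $G$, I would assume $G_1$ is connected; as $M$ saturates $V(G)$ we have $|V(G_1)|=|V(G_2)|=n$, the case $n\le 1$ (forcing $G=K_1\overline{K}_1$) being trivial, so assume $n\ge 2$, whence $G_1$ has no isolated vertex. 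The key step is to show that $\mu$ is a complement-isomorphism, i.e. $xy\in E(G_1)\iff\tilde x\tilde y\notin E(G_2)$. Indeed, the only $G$-edges across the cut are those of $M$, so for distinct $x,y\in V(G_1)$ the set $\{x,y,\tilde x,\tilde y\}$ always contains the edges $x\tilde x,y\tilde y$ and never $x\tilde y,y\tilde x$; hence $xy\in E(G_1)$ together with $\tilde x\tilde y\in E(G_2)$ would make it induce a $C_4$, impossible in a chordal graph. So $\mu$ maps $E(G_1)$ injectively into $E(\overline{G_2})$, and since $|E(\overline{G_2})|=\binom{n}{2}-|E(G_2)|=\binom{n}{2}-\bigl(\binom{n}{2}-|E(G_1)|\bigr)=|E(G_1)|$ (using $G_2\cong\overline{G_1}$), this injection is onto; therefore $\mu$ is an isomorphism from $G_1$ onto $\overline{G_2}$.

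With this in hand I would analyse $G_1$. Both $G_1$ and $G_2$ are induced subgraphs of the chordal graph $G$, hence chordal, so $G_1$ is chordal while $\overline{G_1}\cong G_2$ is chordal too; thus $G_1$ is $\{C_4,C_5,2K_2\}$-free (with $2K_2=\overline{C_4}$), i.e. a split graph, and I fix a partition $V(G_1)=C\cup I$ into a clique $C$ and an independent set $I$. If $I=\emptyset$ then $G_1=K_n$. Otherwise connectedness and $n\ge2$ give every $u\in I$ a neighbour in $C$, so $C\neq\emptyset$; moreover no $u\in I$ can have a non-neighbour in $C$, for if $|C|\ge2$ and $u$ had a neighbour $a\in C$ and a non-neighbour $b\in C$ then (by the key step) $\tilde u\tilde b\in E(G_2)$ and $a-u-\tilde u-\tilde b-b-a$ is an induced $C_5$ of $G$ — its potential chords $a\tilde u,a\tilde b,u\tilde b,\tilde ub$ cross the cut and are not matching edges, while $ub\notin E(G_1)$ — contradicting chordality (and if $|C|=1$ there is nothing to prove). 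Hence every $u\in I$ is adjacent to all of $C$, so every $z\in C$ is adjacent to all of $I$; but then $|I|\ge2$ would give, for distinct $u,v\in I$ and any $z\in C$, the induced $C_5$ $z-u-\tilde u-\tilde v-v-z$, again a contradiction. So $|I|=1$ and, the vertex of $I$ being adjacent to all of $C$, $G_1=K_{|C|+1}=K_n$. In every case $G_1=K_n$, hence $G_2\cong\overline{K}_n$, and by the key step $M$ glues $G_1$ and $G_2$ precisely as in a complementary prism, so $G=K_n\overline{K}_n$.

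I expect the main obstacle to be the key step above: it is the non-obvious point that chordality (really just $C_4$-freeness), together with the trivial identity $|E(\overline{G_2})|=|E(G_1)|$, forces the matching cut to respect the complement isomorphism — after which $G$ is, up to isomorphism, the complementary prism of the split graph $G_1$ and the two induced-$C_5$ arguments close the proof. The remaining care is routine: checking that the exhibited $5$-cycles are genuinely induced, and dealing with the degenerate small values of $n$ and $|C|$, which the argument above absorbs inline.
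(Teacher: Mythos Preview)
The paper states this proposition without proof, so there is nothing to compare against directly; your argument, however, is correct and complete. The two ingredients you isolate are exactly what is needed: (i) $C_4$-freeness of $G$ forces $xy\in E(G_1)\Rightarrow \tilde x\tilde y\notin E(G_2)$, and the edge-count identity $|E(G_1)|=|E(\overline{G_2})|$ upgrades this to an equivalence, so the matching $\mu$ is itself a complement-isomorphism; (ii) since both $G_1$ and $G_2\cong\overline{G_1}$ are chordal, $G_1$ is split, and your two induced-$C_5$ arguments (using (i) to locate the missing $G_2$-edges) reduce the split partition to $I=\emptyset$ or $|I|=1$ with $I$ complete to $C$, i.e.\ $G_1=K_n$. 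The small cases and the verification that the $5$-cycles are chordless are handled correctly. One cosmetic remark: once $G_1=K_n$ and $G_2=I_n$, any perfect matching between them yields a graph isomorphic to $K_n\overline{K}_n$, so the final appeal to the key step is not strictly necessary there.
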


\section{Results on some $P_k$-free graphs}
\label{sec:results-p4s}

In this section, we still consider $\Pi=\mathcal{PM}$ as the property that considers $M$ as a perfect matching.
We begin by showing how to solve $\textsc{Comp-Sub}(\mathcal{PM})$ in polynomial time when the input graph $G$ is $P_5$-free.

\begin{theorem}\label{theo:p5free}
$\textsc{Comp-Sub}(\mathcal{PM})$ is polynomial-time solvable on $P_5$-free graphs.
\end{theorem}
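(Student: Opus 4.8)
The plan is to reduce Theorem~\ref{theo:p5free} entirely to Lemma~\ref{lemma:H_in_cograph}, by showing that on a $P_5$-free graph \emph{every} complementary decomposition with a perfect matching cut is forced to have cograph halves. So the first step I would carry out is the following claim: if $G$ is $P_5$-free and $(G_1,G_2)$ is a complementary decomposition of $G$ whose edge cut $M$ is a perfect matching, then $G_1$ (and hence $G_2$) is a cograph.

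To prove the claim, suppose for contradiction that $G_1$ is not $P_4$-free, and fix an induced $P_4 = w_1w_2w_3w_4$ with all $w_i \in V(G_1)$. Let $w_1' \in V(G_2)$ be the corresponding (matching) vertex of $w_1$. Since $M$ is a matching cut, $w_1'$ has no neighbour in $V(G_1)$ other than $w_1$, so $w_1'$ is nonadjacent to $w_2$, $w_3$ and $w_4$; moreover $w_1w_3$, $w_1w_4$ and $w_2w_4$ are nonedges because $w_1w_2w_3w_4$ is an induced path. Hence $G[\{w_1',w_1,w_2,w_3,w_4\}]$ has exactly the edges $w_1'w_1$, $w_1w_2$, $w_2w_3$, $w_3w_4$, i.e.\ it is an induced $P_5$ in $G$, contradicting $P_5$-freeness. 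Therefore $G_1$ is a cograph, and since cographs are closed under complementation, $G_2 \simeq \overline{G_1}$ is a cograph as well.

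Given the claim, the theorem is immediate: $G \in \textsc{Comp-Sub}(\mathcal{PM})$ if and only if $G$ can be decomposed into two complementary subgraphs $G_1$, $\overline{G_1}$ such that $G_1$ is a cograph and the edge cut of the decomposition is a perfect matching. The ``if'' direction is trivial, and the ``only if'' direction is exactly the claim. By Lemma~\ref{lemma:H_in_cograph} (Algorithm~1), this last condition is decidable in polynomial time (in $O(n^5 + n^4 m)$ time), which finishes the proof.

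I do not expect a genuine obstacle for this particular statement: a priori one might worry that handling $P_5$-free graphs would require invoking structural decompositions (dominating cliques/$C_5$'s) or solving \textsc{Graph Isomorphism} on $P_5$-free graphs, but the one-line ``attach the matching partner to a $P_4$'' trick makes the whole structure collapse to the cograph case. The only points to be careful about are that the induced $P_4$ lies entirely inside one side of the partition (it does, since each $G_i$ is an induced subgraph of $G$) and that $w_1$ really has a partner in $M$ (it does, since $M$ is a \emph{perfect} matching), so no case analysis on the connectivity or block structure of $G_1$ is needed, in contrast to Theorem~\ref{theo:holefree}.
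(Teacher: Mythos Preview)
Your proposal is correct and follows essentially the same approach as the paper: the paper's proof simply asserts that $P_5$-freeness of $G$ together with the perfect matching cut forces $G_1$ and $G_2$ to be $P_4$-free, and then invokes Lemma~\ref{lemma:H_in_cograph}. You have spelled out the one-line argument (attaching the matching partner $w_1'$ to an induced $P_4$ in $G_1$ yields an induced $P_5$ in $G$) that the paper leaves implicit, but the overall strategy is identical.
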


\begin{proof}
Let $G$ be a $2n$-vertex $P_5$-free graph. Recall that if $G \in \textsc{Comp-Sub}(\mathcal{PM})$, then $G$ is decomposable into complementary subgraphs $G_1$ and $G_2$, such the edge cut $M$ of the decomposition is a perfect matching.
Since $G$ is $P_5$-free, the existence of $M$ implies that $G_1$ and $G_2$ are $P_4$-free, that is, $G_1$ and $G_2$ are cographs. Then, the conclusion follows by applying Lemma~\ref{lemma:H_in_cograph}.
\qed \end{proof}

A graph is \textit{extended $P_4$-laden} if every induced subgraph with at most six vertices that contains more than two induced $P_4$'s is $\{2K_2, C_4\}$-free. Extended $P_4$-laden graphs generalize cographs, $P_4$-sparse, $P_4$-lite, $P_4$-laden and $P_4$-tidy graphs, and they were considered under the perspective of partitioning. For instance, Bravo et al.~\cite{bravo2012partitioning} show that partitioning an extended $P_4$-laden graph into at most $k$ independent sets and at most $\ell$ cliques is linear-time solvable, for $k, \ell \geq 1$ and Bravo et~al.~\cite{bravo2013clique} show a linear time algorithm for recognizing graphs that can be partitionable into a clique and a forest. 
In addition, Pedrotti and De Mello~\cite{pedrotti2012minimal} describe a linear-time algorithm that lists the minimal separators of extended $P_4$-laden graphs.

Another related result to partitioning is implied by considering that extended $P_4$-laden graphs are $P_6$-free. The result on $3$-colorability by Randerath and Schiermeyer~\cite{randerath20043} implies that the problem of partitioning a graph into $3$ independent sets is polynomial-time solvable on extended $P_4$-laden graphs.

We present in Proposition~\ref{prop:extendedP4laden} a characterization concerned to $\textsc{Comp-Sub}(\mathcal{PM})$ on extended $P_4$-laden graphs.

\begin{proposition}\label{prop:extendedP4laden}
Let $G$ be an extended $P_4$-laden graph of order $2n$.
It holds that $G \in \textsc{Comp-Sub}(\mathcal{PM})$ if and only if $G = K_n\overline{K}_n$.
\end{proposition}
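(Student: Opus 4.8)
The plan is to prove both directions, the backward one being immediate and the forward one carrying the content.

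\smallskip
\textbf{($\Leftarrow$).} The graph $K_n\overline{K}_n$ is a split graph: partition its vertices into the clique $V(K_n)$ and the independent set $V(\overline{K}_n)$. Since split graphs form a hereditary class of $\{2K_2,C_4\}$-free graphs, every induced subgraph of $K_n\overline{K}_n$ is $\{2K_2,C_4\}$-free, so $K_n\overline{K}_n$ is vacuously extended $P_4$-laden; and $(K_n,\overline{K}_n)$ together with the prism matching is a complementary decomposition whose edge cut is a perfect matching, so $K_n\overline{K}_n\in\textsc{Comp-Sub}(\mathcal{PM})$.

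\smallskip
\textbf{($\Rightarrow$).} Let $G$ be extended $P_4$-laden with a complementary decomposition $(G_1,G_2)$, $G_1\simeq\overline{G}_2$, and perfect matching cut $M=\{u_iv_i:i\in[n]\}$, where $u_i\in V(G_1)$, $v_i\in V(G_2)$. It suffices to prove $G_1\in\{K_n,\overline{K}_n\}$: then $\{G_1,G_2\}=\{K_n,\overline{K}_n\}$ up to isomorphism, and $G$ — the union of these two graphs together with a perfect matching — is isomorphic to $K_n\overline{K}_n$. After disposing of $n\le2$ directly (these force $G=K_n\overline{K}_n$), assume $n\ge3$ and, for contradiction, $G_1\notin\{K_n,\overline{K}_n\}$. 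The core assertion is: \emph{if $G_1$, or equivalently (by the symmetry of the construction) $G_2$, contains an induced $P_3$, then $G$ is not extended $P_4$-laden.} To see this, take an induced $P_3$ $abc$ of $G_1$ with $ab,bc\in E(G_1)$ and $ac\notin E(G_1)$, and let $a',b',c'$ be the $M$-partners of $a,b,c$; then $\{a,b,c,a',b',c'\}$ consists of six distinct vertices, the only cross edges among them are $aa',bb',cc'$, and $ac\notin E(G)$, so $\{a',a,b,c\}$ and $\{a,b,c,c'\}$ already induce $P_4$'s of $G$. A short analysis of which of $a'b',a'c',b'c'$ lie in $G_2$ then shows that $G[\{a,b,c,a',b',c'\}]$ has more than two induced $P_4$'s but is not $\{2K_2,C_4\}$-free: if $a'c'\in E(G_2)$ then $\{a,a',c',c,b\}$ induces a $C_5$ (hence five induced $P_4$'s), and for every adjacency pattern of $b'$ to $\{a',c'\}$ the six vertices also contain an induced $C_4$ or $2K_2$ (e.g. $\{a',b',b,c'\}$ induces $2K_2$ when $b'$ is adjacent to neither $a'$ nor $c'$); if $a'c'\notin E(G_2)$ then $\{a',a,b,c,c'\}$ induces a $P_5$, and when $b'$ is adjacent to neither $a'$ nor $c'$ the set $\{a,a',c,c'\}$ induces $2K_2$ while $a'abc,abcc',b'bcc',b'baa'$ are induced $P_4$'s, whereas when $b'$ is adjacent to, say, $a'$ the set $\{a,b,b',a'\}$ induces a $C_4$ and a third induced $P_4$ is still present. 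This proves the assertion.

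\smallskip
It remains to combine the assertion with a structural dichotomy, using that a graph with no induced $P_3$ is a disjoint union of cliques. If neither $G_1$ nor $G_2$ has an induced $P_3$, then $G_1$ and $\overline{G}_2\simeq G_1$ are both disjoint unions of cliques, so $G_1$ is simultaneously a cluster graph and a complete multipartite graph, which forces $G_1\in\{K_n,\overline{K}_n\}$, contradicting our assumption. Otherwise $G_1$ or $G_2$ has an induced $P_3$, and the assertion contradicts that $G$ is extended $P_4$-laden. In every case we reach a contradiction, so $G_1\in\{K_n,\overline{K}_n\}$ and hence $G=K_n\overline{K}_n$.

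\smallskip
I expect the main obstacle to be the case analysis inside the core assertion: for the matched triple $(a,a'),(b,b'),(c,c')$ one must run through the eight possibilities for the edges $a'b',a'c',b'c'$ of $G_2$ and, in each, pin down a $6$-vertex induced subgraph carrying at least three induced $P_4$'s together with an induced $C_4$ or $2K_2$. Each case is routine, but some extra care is needed for the small values of $n$ and to verify that the six vertices in question are genuinely distinct.
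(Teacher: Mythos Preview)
Your proposal is correct and follows essentially the same approach as the paper: pick an induced $P_3$ on one side of the decomposition, pass to the six-vertex subgraph formed with the matched partners, and show it contains both a $2K_2$ or $C_4$ and more than two induced $P_4$'s, contradicting the extended $P_4$-laden hypothesis. Your write-up is in fact slightly more careful than the paper's, since you explicitly treat the dichotomy where neither $G_1$ nor $G_2$ contains a $P_3$ and you explicitly track the ``more than two $P_4$'s'' requirement, which the paper's argument leaves implicit.
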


\begin{proof}
Let $G = K_n\overline{K}_n$.
We analyse the subgraphs of $G$ with at most $6$ vertices to show that $G$ is an extended $P_4$-laden graph.
Let $G'$ be a subgraph of $G$ such that $|V(G')| \leq 6$. If $G'$ is a subgraph of $K_n$ or $\overline{K}_n$, it is clear that $G'$ does not have induced $P_4$'s. Then, we suppose that $V(G')$ intersects both $V(K_n)$ and $V(\overline{K}_n)$. Notice that two induced $P_4$'s arise in $G'$ only if $|V(G') \cap V(\overline{K}_n)| \geq 3$ and $|V(G') \cap V(K_n)| \geq 3$. Since $G$ is a split graph, $G'$ is also a split graph. This implies that $G'$ is $\{2K_2, C_4\}$-free and hence, $G$ is extended $P_4$-laden.

Now, we show that $G \in \textsc{Comp-Sub}(\mathcal{PM})$ implies that $G = K_n\overline{K}_n$.
Suppose that  $G \in \textsc{Comp-Sub}(\mathcal{PM})$, and, by contradiction, that $G \neq K_n\overline{K}_n$.
Since $G \in \textsc{Comp-Sub}(\mathcal{PM})$ there exist a complementary decomposition $(G_1, G_2)$ of $G$, such that the edge cut $M$ of the decomposition is a perfect matching. Let $M = \{ u_1v_1, \dots, u_nv_n \}$ where $u_i \in V(G_1)$ and $v_i \in V(G_2)$, for every $i \in [n]$.

Given that $G \neq K_n\overline{K}_n$, let $u_1u_2u_3$ be a $P_3$ in $G_1$ and $G' = G[\{u_i,v_i : i \in [3]\}]$.

Since $u_iv_i \in E(G')$, for every $i \in [3]$, we have that $\{u_1,v_1,u_3,v_3\}$ induces a $2K_2$ in $G'$. Then, we may suppose that $v_1v_3 \in E(G')$. Notice that $\{u_2,v_2,u_1,v_3\}$ induces a $2K_2$ in $G'$, then we consider that $v_1v_2 \in E(G')$ or $v_2v_3 \in E(G')$. In both possibilities we have an induced $C_4$ in $G'$, by $\{u_1,v_1,u_2,v_2\}$ in the first, and by $\{u_2,v_2,u_3,v_3\}$ in the latter, a contradiction.
\qed \end{proof}

Our last result characterizes cographs \emph{yes}-instances of $\textsc{Comp-Sub}(\mathcal{PM})$. Recall that a \textit{cograph} is a $P_4$-free graph.

\begin{proposition}\label{prop:cographMPerfect}
Let $G$ be a cograph of order $2n$.
Then,  $G \in \textsc{Comp-Sub}(\mathcal{PM})$ if and only if $G = K_2$.
\end{proposition}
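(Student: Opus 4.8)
The plan is to prove both directions. The easy direction is that $G = K_2$ belongs to $\textsc{Comp-Sub}(\mathcal{PM})$: here $n = 1$, and the decomposition into $G_1 = K_1$ and $G_2 = \overline{K}_1 = K_1$ with the single edge of $K_2$ as the (perfect) matching cut works, and $K_1 \simeq \overline{K_1}$. For the converse, suppose $G$ is a cograph of order $2n$ with $G \in \textsc{Comp-Sub}(\mathcal{PM})$, witnessed by a complementary decomposition $(G_1, G_2)$ whose edge cut $M = \{u_1v_1, \dots, u_nv_n\}$ is a perfect matching. Since cographs are closed under taking induced subgraphs, both $G_1$ and $G_2$ are cographs.

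The key observation is that Proposition~\ref{prop:extendedP4laden} already characterizes the extended $P_4$-laden \emph{yes}-instances as exactly $K_n\overline{K}_n$. Since every cograph is extended $P_4$-laden (cographs are $P_4$-free, hence trivially satisfy the defining condition), we get immediately that $G = K_n\overline{K}_n$. So the only remaining task is to show that if $K_n\overline{K}_n$ is itself a cograph then $n = 1$. For this, I would exhibit an induced $P_4$ in $K_n\overline{K}_n$ whenever $n \ge 2$: take two vertices $a_1, a_2$ in the $K_n$ side (adjacent, since it is a clique) with matching partners $b_1, b_2$ in the $\overline{K}_n$ side; then $b_1 a_1 a_2 b_2$ is a path, and it is induced because $b_1 b_2 \notin E$ (independent set), $b_1 a_2 \notin E$ and $b_2 a_1 \notin E$ (the matching only joins corresponding vertices). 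Hence $K_n\overline{K}_n$ contains an induced $P_4$ for $n \ge 2$, contradicting that $G$ is a cograph, so $n = 1$ and $G = K_1\overline{K}_1 = K_2$.

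Alternatively, if one prefers a self-contained argument not invoking Proposition~\ref{prop:extendedP4laden}: assume for contradiction $n \ge 2$ and that $G_1$ (say) is not edgeless; pick $u_1 u_2 \in E(G_1)$ with partners $v_1, v_2$, and observe that $v_1 u_1 u_2 v_2$ together with the nonedges forced by $M$ being a matching already gives an induced $P_4$ in $G$ unless $v_1 v_2 \in E(G_2)$; but then $u_1 v_1 v_2 u_2$-type configurations or a short case check on the four vertices $\{u_1, v_1, u_2, v_2\}$ yields an induced $P_4$ regardless (the four vertices carry the two matching edges plus $u_1u_2$ plus possibly $v_1v_2$, and none of these $2K_2$/$P_4$/$C_4$ configurations on four vertices is $P_4$-free except when one of $G_1, G_2$ is edgeless — and if both $G_1$ and $G_2$ are edgeless with $n \ge 2$, then $G$ is a perfect matching on $\ge 4$ vertices, which is $2K_2$, still not a connected cograph of the required form unless... actually $2K_2$ is $P_4$-free, so one must additionally use $G_1 \simeq \overline{G_2}$: if $G_1$ is edgeless then $G_2 = \overline{G_1}$ is complete, contradiction with $G_2$ also edgeless for $n\ge 2$). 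The cleanest route is clearly the first one, reducing to Proposition~\ref{prop:extendedP4laden} and then a one-line induced-$P_4$ exhibit.

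The main obstacle is essentially bookkeeping: making sure the ``$K_n\overline{K}_n$ is a cograph $\Rightarrow n=1$'' step is stated crisply, since for $n = 1$ we genuinely do get $K_2$ which is a cograph, so the induced-$P_4$ construction must be shown to require $n \ge 2$. Everything else is immediate from the already-proved Proposition~\ref{prop:extendedP4laden} together with the fact that cographs form a subclass of extended $P_4$-laden graphs.
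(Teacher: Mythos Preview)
Your proposal is correct. The paper in fact omits the proof of this proposition, presenting it immediately after Proposition~\ref{prop:extendedP4laden} as a simpler specialization; your reduction---cographs are extended $P_4$-laden, hence $G=K_n\overline{K}_n$, and then $b_1a_1a_2b_2$ is an induced $P_4$ in $K_n\overline{K}_n$ for $n\ge 2$---is exactly the natural argument this placement suggests, and it is complete as written. The alternative self-contained sketch you give is unnecessary and, as you yourself note, messier; stick with the first route.
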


\section{Concluding Remarks}
\label{sec:conclusions}

We have considered $\textsc{Comp-Sub}(\mathcal{PM})$ problem when $\mathcal{PM}$ states the edge cut of the decomposition as a perfect matching. We have presented polynomial-time algorithms for solving $\textsc{Comp-Sub}(\mathcal{PM})$ when the input graph $G$ is $hole$-free or $P_5$-free and we have shown
characterizations on chordal, distance-hereditary, and extended $P_4$-laden graphs.

With respect to complexity results, despite its resemblance with the $\NP$-complete problem \textsc{Perfect Matching Cut}, we show that $\textsc{Comp-Sub}(\mathcal{PM})$ is $\GI$-hard when the given input graph $G$ is $\{C_{k\geq 7}, \overline{C}_{k\geq 7} \}$-free. 

We remark that our results by Theorem~\ref{theo:perfHardness} and Theorem~\ref{theo:holefree} address the cases when $G$ is a $C_{\ell \geq k}$-free graph, for every $k \geq 3$, except for $k = 6$. Then, we leave the following conjecture.

\begin{conjecture}
$\textsc{Comp-Sub}(\mathcal{PM})$ is $\GI$-hard on $C_{k\geq 6}$-free graphs.
\end{conjecture}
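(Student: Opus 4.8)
The plan is to keep the reduction scheme from \textsc{Graph Isomorphism} on connected split graphs used in Theorem~\ref{theo:perfHardness}, but to redesign the gadget so that the resulting instance is $C_{k\geq 6}$-free rather than merely $C_{k\geq 7}$-free. Concretely, from an instance $(A,B)$ of \textsc{Graph Isomorphism} on split graphs I would again build a graph $G$ together with a ``forced'' complementary decomposition $(G_1,G_2)$ whose cut $M$ is a perfect matching, where $G_1=\psi(A)$ and $\overline{G}_2=\psi(B)$ for a structure-preserving transformation $\psi$ chosen so that $G_1\simeq\overline{G}_2$ if and only if $A\simeq B$. As in Theorem~\ref{theo:perfHardness}, I would attach pendant (degree-one) vertices and a block of universal vertices so that any complementary decomposition of $G$ is forced to realize exactly this partition, and so that the equivalence with $A\simeq B$ follows from cancelling the join with the universal clique. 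The correctness half of the argument is then essentially identical to the one already carried out.

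The real work is the structural claim that $G$ is $C_{k\geq 6}$-free, and the first step I would take is to pinpoint where the construction of Theorem~\ref{theo:perfHardness} fails: there the largest induced cycle is an induced $C_6$ obtained from two adjacent clique vertices $k_1,k_2$, their matched partners $b_1,b_2\in V(\overline{B})$, and an induced $P_4$ joining $b_1$ to $b_2$ inside $\overline{B}$; such a $C_6$ is unavoidable because the clique forces $k_1\sim k_2$ and split graphs routinely contain induced $P_4$'s. A second cautionary step is to rule out a ``doubled'' gadget built from two mirror copies joined by a perfect matching $\{vv'\}$: since adjacency mirrors across the copies, every induced cycle crossing the cut has even length, and two non-adjacent vertices of a copy with two common neighbours already yield a crossing induced $C_6$. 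Thus neither natural gadget is $C_{k\geq 6}$-free, and a genuinely new one is required.

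The decisive design constraint I would exploit is the position of the class $C_{k\geq 6}$-free immediately above the hole-free ($C_{k\geq 5}$-free) graphs, on which the problem is polynomial by Theorem~\ref{theo:holefree}. That algorithm rests on the first structural claim in the proof of Theorem~\ref{theo:holefree}, namely that a biconnected part forces its complementary part to be a cluster graph through the appearance of a hole of length at least five; allowing induced $C_5$'s is exactly what disables this implication. Hence the new gadget must deliberately \emph{contain} induced $C_5$'s, so that $G_1$ and $G_2$ are not both split (otherwise every induced cycle of $G$ would be cut-crossing, hence of even length, making $G$ hole-free and the instance easy), while forbidding every induced cycle of length at least six. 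I would therefore choose $\psi$ so that $G_1$ and $G_2$ are themselves $C_{k\geq 6}$-free graphs carrying $C_5$'s, and then split the analysis of an arbitrary induced cycle $C$ of $G$ into two regimes: cycles contained in a single part, handled by the $C_{k\geq 6}$-freeness of $\psi(A)$ and $\psi(B)$; and cycles crossing the matching cut, handled by a parity argument (a cut-crossing cycle has even length) together with the sparsity of $M$ and the forced chords coming from the universal vertices of the gadget.

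The hard part will be ruling out the cut-crossing induced cycles of length $6,8,\dots$. Because $M$ is a perfect matching, an induced path in $G_1$ between two matched vertices $u_i,u_j$ can be concatenated, through the two cut edges $u_iv_i,u_jv_j$, with an induced path in $G_2$ between the corresponding matched vertices $v_i,v_j$; the union is an induced cycle unless some internal vertex of the first arc is matched to an internal vertex of the second, which is the only mechanism able to supply a chord. The crux is thus to engineer $\psi$ and the matching so that, for every such pair of arcs whose total length is at least four, such a diagonal cut edge is forced, while simultaneously preserving the isomorphism equivalence $G_1\simeq\overline{G}_2\iff A\simeq B$ and retaining the $C_5$'s needed to escape the hole-free regime. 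Reconciling these three demands---sparse cut, no long crossing cycle, and faithful encoding of \textsc{Graph Isomorphism}---is where I expect the main difficulty to lie; a successful gadget would settle the conjecture and would likely also clarify whether membership of $\textsc{Comp-Sub}(\mathcal{PM})$ in $\GI$ (hence $\GI$-completeness) holds on this class.
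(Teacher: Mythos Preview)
The statement you are addressing is a \emph{conjecture} in the paper, not a theorem: the authors explicitly leave it open, noting that Theorems~\ref{theo:perfHardness} and~\ref{theo:holefree} settle the $C_{\ell\geq k}$-free case for every $k$ except $k=6$. There is therefore no proof in the paper to compare your proposal against.

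Your proposal is likewise not a proof but a research outline. You correctly diagnose why the gadget of Theorem~\ref{theo:perfHardness} fails (the unavoidable induced $C_6$ through two clique vertices and a $P_4$ in $\overline{B}$), and you correctly observe that any successful gadget must contain induced $C_5$'s to escape the polynomial regime of Theorem~\ref{theo:holefree}. But the central step---actually exhibiting a transformation $\psi$ and a matching that simultaneously (i) forbid every cut-crossing induced cycle of length $\geq 6$, (ii) keep each part $C_{k\geq 6}$-free, and (iii) preserve the equivalence $G_1\simeq\overline{G}_2\iff A\simeq B$---is explicitly left undone; you yourself write that ``reconciling these three demands\ldots is where I expect the main difficulty to lie.'' So what you have is an accurate map of the obstacles, matching the paper's assessment that this case is open, but no resolution of the conjecture.
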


We also leave the complexity of $\textsc{Comp-Sub}(\mathcal{PM})$ on $C_5$-free graphs and $P_6$-free graphs open.
Furthermore, we still do not know whether $\textsc{Comp-Sub}(\mathcal{PM})$ is $\GI$-complete. 

\bibliographystyle{splncs04}
\bibliography{referencesII}

%

\end{document}